\crefname{figure}{Figure}{Figures}
\algrenewcommand\alglinenumber[1]{\normalfont\tiny\sffamily #1}
\algrenewcommand\algorithmicfunction{\textbf{subroutine}}
\newtheorem{definition}{Definition}[section]
\newcommand{\neghphantom}[1]{\settowidth{\dimen0}{#1}\hspace*{-\dimen0}}
\newcommand{\mypar}[1]{\medskip\noindent{\bf #1}~}
\newcommand{\etal}{\textit{et~al.\@}\xspace}
\newcommand{\f}{Fr\'echet\xspace}
\newcommand{\dF}{\ensuremath{d_\mathrm{F}}}
\newcommand{\eps}{\varepsilon\xspace}
\newcommand{\R}{\ensuremath{\mathbb{R}}}
\newcommand{\FS}[1][\Delta]{\ensuremath{#1}\textnormal{-FS}}
\newcommand{\Yes}{\text{``Yes''}\xspace}
\newcommand{\No}{\text{``No''}\xspace}
\newcommand{\Timeout}{\text{``Timeout''}\xspace}
\newtheorem{thm}{Theorem}
\let\originalleft\left
\let\originalright\right
\renewcommand{\left}{\mathopen{}\mathclose\bgroup\originalleft}
\renewcommand{\right}{\aftergroup\egroup\originalright}
\newcommand{\IfRestatedTF}[2]{\ifthmt@thisistheone #2\else #1\fi}
\renewcommand\thefootnote{\textsuperscript{\@fnsymbol\c@footnote}}
\let\old@thanks\thanks 
\DeclareRobustCommand\thanks[2][]{
  \AddToHook{begindocument/end}{
    \if\relax#1\relax%
      \footnotemark%
    \else%
      \protect\refstepcounter{footnote}\protect\label{#1}%
    \fi%
    \protected@xdef\@thanks{%
      \@thanks\protect\footnotetext[\the\c@footnote]{#2}%
    }%
  }%
}
\let\old@maketitle\maketitle
\def\maketitle{\old@maketitle\def\thefootnote{\@arabic\c@footnote}}
\def\footnoterule{\kern-.4\p@\hrule\@width 5pc\kern3\p@\kern-\footnotesep}
\newtheorem{lemma}{Lemma}
\title{Computing the Fréchet Distance When Just One Curve is $c$-Packed:\\A Simple Almost-Tight Algorithm}
\author{%
    Jacobus Conradi\ref{affil:bonn} \and %
    Ivor van der Hoog\ref{affil:itu} \and %
    Thijs van der Horst\ref{affil:uu}\ref{affil:tue} \and %
    Tim Ophelders\ref{affil:uu}\ref{affil:tue}
}
\begin{document}

\thanks[affil:bonn]{Institute of Computer Science, Universität Bonn, Germany}
\thanks[affil:itu]{Theoretical Computer Science, IT-University of Copenhagen, Denmark}
\thanks[affil:uu]{Department of Information and Computing Sciences, Utrecht University, the Netherlands}
\thanks[affil:tue]{Department of Mathematics and Computer Science, TU Eindhoven, the Netherlands}

\date{}
\maketitle

\begin{abstract}
\noindentWe study approximating the continuous Fréchet distance of two curves with complexity $n$ and $m$, under the assumption that only one of the two curves is $c$-packed. 
Driemel, Har{-}Peled and Wenk DCG'12 studied Fréchet distance approximations under the assumption that both curves are $c$-packed. In $\mathbb{R}^d$, they prove a $(1+\varepsilon)$-approximation in $\tilde{O}(c\,\frac{n+m}{\varepsilon})$ time.
Bringmann and Künnemann IJCGA'17 improved this to $\tilde{O}(c\,\frac{n + m}{\sqrt{\varepsilon}})$ time, which they showed is near-tight under SETH.
Both algorithms have a hidden exponential dependency on the dimension $d$.
Recently, Gudmundsson, Mai, and Wong ISAAC'24 studied our setting where only one of the curves is $c$-packed.
They provide an involved $\tilde{O}((c+\varepsilon^{-1})(cn\varepsilon^{-2} + c^2m\varepsilon^{-7} + \varepsilon^{-2d-1}))$-time algorithm when the $c$-packed curve has $n$ vertices and the arbitrary curve has $m$.
In this paper, we show a simple technique to compute a $(1+\varepsilon)$-approximation in $\mathbb{R}^d$ in time $O(c\,\frac{n+m}{\varepsilon}\log\frac{n+m}{\varepsilon})$ when one of the curves is $c$-packed.
Our approach is not only simpler than previous work, but also significantly improves the dependencies on $c$, $\varepsilon$, and $d$ (which is only linear).
Moreover, it almost matches the asymptotically tight bound for when both curves are $c$-packed. 
Our algorithm is robust in the sense that it does not require knowledge of $c$, nor information about which of the two input curves is $c$-packed.
\end{abstract}

\paragraph{funding}
    Ivor van der Hoog is supported by the VILLUM Foundation grant (VIL37507) ``Efficient Recomputations for Changeful Problems''.
    Jacobus Conradi is funded by the iBehave Network: Sponsored by the Ministry of Culture and Science of the State of North Rhine-Westphalia and affiliated with the Lamarr Institute for Machine Learning and Artificial Intelligence.
    Tim Ophelders is supported by the Dutch Research Council (NWO) under project no.\ VI.Veni.212.260.%

\section{Introduction}

The Fréchet distance is a widely studied similarity measure for curves, with numerous real-world applications such as handwriting recognition \cite{sriraghavendra2007frechet}, map-matching~\cite{wenk06map_matching}, comparing coastlines~\cite{mascret2006coastline}, time series clustering~\cite{driemel16clustering}, or data analysis of outlines of shapes in geographic information systems~\cite{devogele2002new}, trajectories of moving objects~\cite{brakatsoulas2005map, buchin2020group,acmsurvey20, su2020survey}, air traffic~\cite{bombelli2017strategic} and protein structures~\cite{jiang2008protein}. 
Like the Hausdorff distance, the Fréchet distance is a bottleneck measure that outputs the distance between a pair of points from the two curves. However, unlike the Hausdorff distance, it respects the ordering of points along the curves, making it particularly well-suited for measuring similarity between moving data entries. 

Already in 1995, Alt and Godau~\cite{alt95continuous_frechet} presented a near-quadratic time algorithm for computing the Fréchet distance between two polygonal curves with $n$ and $m$ vertices, achieving a running time of $O(nm \log(n+m))$. Although there have since been incremental improvements~\cite{buchin17continuous_frechet,cheng25subquadratic_frechet}, strong evidence suggests that significantly faster algorithms are unlikely. In particular, Bringmann~\cite{bringmann14hardness} showed that, assuming the \emph{Strong Exponential Time Hypothesis}, no strongly subquadratic algorithm (i.e., with running time $O((nm)^{1-\delta})$ for any $\delta > 0$) exists. This lower bound also holds for algorithms that approximate the Fréchet distance within a factor less than~3, and even in one-dimensional settings~\cite{buchin19seth_says}. More recently, Cheng, Huang, and Zhang~\cite{cheng25subquadratic} gave the first (randomised) constant-factor approximation algorithm with subquadratic complexity, achieving a $(7 + \varepsilon)$-approximation in $\tilde{O}(nm^{0.99})$ time.

However, significantly faster algorithms are possible for restricted classes of curves. For example, when the curves consist of sufficiently long edges relative to their Fréchet distance, Gudmundsson, Mirzanezhad, Mohades, and Wenk~\cite{gudmundsson19long} gave a near-linear time exact algorithm. Additionally, several models have been proposed to capture the structure of ``realistic'' curves commonly found in applications. These include the notions of \emph{$\kappa$-bounded}~\cite{alt04planar_curves}, \emph{$\phi$-low density}~\cite{schwarzkopf96range_searching_low_density}, and \emph{$c$-packed} curves~\cite{driemel12realistic}.
Of these three realism assumptions, $c$-packedness is the most frequently studied~\cite{bringmann17cpacked,bruning2022faster, conradi20231+, driemel2023discrete, gudmundsson2023Cluster, gudmundsson2024Approximating, gudmundsson:mapmatching, gudmundsson23approximating_packedness, harpeled2025packeditreally, Hoog2024DataRealistic}.

In this work, we study approximating the \f distance for $c$-packed curves. A polygonal curve is said to be $c$-packed, for some value $c \in \mathbb{R}$, if the total length of the curve inside any disk of radius $r$ is at most $cr$. This notion, introduced by Driemel, Har-Peled, and Wenk~\cite{driemel12realistic}, captures the geometric structure of many real-world trajectories. They showed that for two $c$-packed curves with $n$ and $m$ vertices, a $(1 + \varepsilon)$-approximation of the Fréchet distance can be computed in $O{\left(c\, (n+m) \cdot \left(\frac{1}{\eps} + \log (n+m) \right)\right)}$ time.
The dependence on $\varepsilon$ was later improved by Bringmann and 
Künnemann~\cite{bringmann17cpacked}, who gave an algorithm running in $\tilde{O}(c\, (n+m)/\sqrt{\varepsilon})$ time, where the $\tilde{O}$ hides polylogarithmic factors in $n$ and $\varepsilon^{-1}$. Their result is conditionally optimal, assuming SETH, for curves in $\mathbb{R}^d$ when $d \geq 5$ and $m \in \Theta(n)$~\cite{bringmann14hardness}.
Gudmundsson, Sha, and Wong~\cite{gudmundsson23approximating_packedness} empirically studied the packedness of curves from real-world data sets, finding that many of them are $c$-packed for small values of $c$, often with $c \ll n$. This supports the practical relevance of algorithms whose running time depends on $c$ rather than on the input size directly.
The improved computational complexity for $c$-packed polygonal curves compared to arbitrary curves carries over to many problems related to the Fréchet distance. This ranges from the computation of the Fréchet distance of algebraic curves \cite{CONRADI2025Revisiting}, to the computation of partial similarity measures based on the Fréchet distance \cite{buchin2014shortcutNP, conradi2024kShortcut, Driemel2013Jaywalking} to the construction of approximate nearest neighbor data structures~\cite{conradi20231+}, minimum-center clustering \cite{bruning2022faster,conradi2025Subtrajectory, Hoog2025Clustering} and maximum-cardinality clustering \cite{gudmundsson2023Cluster}.

\mypar{When only one curve is $c$-packed.}
The algorithms of Driemel, Har-Peled, and Wenk~\cite{driemel12realistic} and Bringmann and Künnemann~\cite{bringmann17cpacked} assume that \emph{both} input curves are $c$-packed. More recently, two works have extended this setting to the asymmetric case, in which only one of the two curves is $c$-packed. 
Van der Hoog, Rotenberg, and Wong~\cite{Hoog2024DataRealistic} studied the simpler case of approximating the \emph{discrete} Fréchet distance under this assumption. They presented a $(1+\varepsilon)$-approximation algorithm for curves in $\mathbb{R}^d$, with a running time of $O{\left(c\,\frac{n+m}{\varepsilon} \log \frac{n}{\eps} \right)}$.
Gudmundsson, Mai, and Wong~\cite{gudmundsson2024Approximating} considered the \emph{continuous} Fréchet distance, which they showed to be considerably more complex. For a $c$-packed curve with $n$ vertices and an arbitrary curve with $m$ vertices in $\mathbb{R}^d$, they gave an algorithm with running time
$
\tilde{O}{\left(\left( c + \frac{1}{\varepsilon} \right) \left( \frac{cn}{\varepsilon^2} + \frac{c^2 m}{\varepsilon^7} + \frac{1}{\varepsilon^{2d+1}} \right) \right)}.
$
While their approach is the most general to date, its running time has significantly worse dependencies on $c$, $\varepsilon$, and the dimension $d$ compared to previous (less general) results.
To obtain this runtime, their algorithm relies on several sophisticated techniques and data structures, including layered graphs over the parameter space of $P$ and $Q$, as well as geometric range searching structures. 
These data structures form the bottleneck of their approach and are the source of their exponential dependency on the dimension $d$.

In this paper we give evidence that the asymptotic running time for the approximation of the Fréchet distance in this asymmetric setting should be the same as in the symmetric setting. 
We provide a strong structural theorem that bounds the complexity of the free-space between a $c$-packed and a non-$c$-packed curve. 
This strong characterisation of the free-space allows for a simple, straightforward and efficient approach for computing a $(1+\eps)$-approximation of the continuous Fréchet distance in $O(c \,\frac{n+m}{\eps} \log \frac{n+m}{\eps})$, if at least one of~$P$ or $Q$ is $c$-packed. This approach is not only more self-contained and simpler, it improves the dependency on $c$, $\eps$, and the dimension $d$ compared to~\cite{gudmundsson2024Approximating}.

\section{Preliminaries}

    A $d$-dimensional (polygonal) \emph{curve} $P$ defined by an ordered set $(p_1, \dots, p_n)\subset\R^d$ 
    is a piecewise-linear function $P \colon [1, n] \to \R^d$, where for $t\in[i,i+1]$ it is defined by $P(t)=p_i + (t-i)(p_{i+1}-p_i)$.   
    The points $p_1, \dots, p_n$ are the \emph{vertices} of $P$.
    The \emph{edges} of $P$ are the directed line segments from $p_i$ to $p_{i+1}$, for $i < n$.
    We denote by $P[x_1, x_2]$ the subcurve of $P$ over the domain $[x_1, x_2]$.
    A curve $P$ is said to be \emph{$c$-packed} if, for every $r > 0$, the total length of $P$ inside a ball of radius~$r$ is at most $cr$.

    A \emph{reparametrisation} of $[1, n]$ is a non-decreasing surjection $f \colon [0, 1] \to [1, n]$.
    Two reparametrisations $f$ and $g$ of $[1, n]$ and $[1, m]$, respectively, describe a \emph{matching} $(f, g)$ between two curves $P$ and $Q$ with $n$ and $m$ vertices, where for any $t \in [0, 1]$, the point $P(f(t))$ is matched to $Q(g(t))$.
    The (continuous) \emph{\f distance} $\dF(P, Q)$ between $P$ and $Q$ is defined as
    \[\dF(P,Q):=\min_{(f,g)}\max_{t\in[0,1]}\|P(f(t)) - Q(g(t))\|,\]
    where $(f,g)$ range over all matchings between $P$ and $Q$.
    
    For computing the \f distance one analyses the free space, a subset of the parameter space of the two given curves $P$ and $Q$.
    The \emph{parameter space} of curves $P$ and $Q$ with $n$ and $m$ vertices, respectively, is the rectangle $[1, n] \times [1, m]$ together with the regular grid whose cells are the squares $[i, i+1] \times [j, j+1]$ for integers $i$ and~$j$. 
    A point $(x, y)$ in the parameter space corresponds to the points $P(x)$ and $Q(y)$.
    Each cell $[i, i+1] \times [j, j+1]$ corresponds to the $i$-the edge of $P$ and the $j$-th edge of $Q$.

    \begin{definition}
          For any positive $\Delta$, we say a point $(x, y)$ in the parameter space of~$P$ and $Q$ is a $\Delta$-free point if $\| P(x) - Q(y) \| \leq \Delta$.     
    The \emph{$\Delta$-free space} ($\Delta$-FS) is the set of points $(x,y)\in[1,n]\times[1,m]$ such that $\|P(x)-Q(y)\|\leq\Delta$. 
     Finally, we say that a point $(x, y)$ is $\Delta$-\emph{reachable} if there is a monotone path (in $x$- and $y$-direction) from $(1,1)$ to $(x,y)$ which is contained in the $\Delta$-free space of $P$ and $Q$.
    \end{definition}

    Alt and Godau \cite{alt95continuous_frechet} observed that the $\Delta$-free space in every cell coincides with an ellipse intersected with that cell which can be computed in $O(1)$ time. The Fréchet distance between two curves $P$ and $Q$ is at most $\Delta$, if and only if $(n, m)$ is $\Delta$-reachable. This gives rise to the classical algorithm that decides whether $\dF(P,Q)\leq\Delta$ by discovering all $\Delta$-reachable points in the parameter space, one cell at a time.

\section{Contribution and Technical Overview}
We present a new algorithm that $(1+\varepsilon)$-approximates the Fréchet distance between two curves in $\mathbb{R}^d$.

\begin{restatable}{thm}{thmMain}\label{thm:main}
Let $P$ and $Q$ be polygonal curves in $\mathbb{R}^d$ with $n$ and $m$ vertices, respectively.
Let $P$ or~$Q$ be $c$-packed for some unknown $c$.
For any $\eps \in (0,1]$, 
\IfRestatedTF{\Call{ApproxFr\'echetDistance}{$P,Q,\eps$}}{there exists an algorithm that}
computes a value~$\Delta$ with $\dF(P,Q)\leq\Delta\leq(1+\eps)\dF(P,Q)$ in time $O{\left(d \cdot c\, \frac{n+m}{\eps} \log{\left(\frac{n + m}{\eps} \right)}\right)}$.
\end{restatable}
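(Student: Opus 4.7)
The plan is to adapt the classical framework of Driemel, Har-Peled, and Wenk~\cite{driemel12realistic}: reduce the $(1+\varepsilon)$-approximation problem to $O(\log((n+m)/\varepsilon))$ instances of the decision problem ``Is $\dF(P,Q)\leq\Delta$?'', and solve each one by cell-by-cell reachability analysis on simplified curves. The algorithmic skeleton is standard; the novelty must lie in a structural theorem on free-space complexity that holds even when only one of $P,Q$ is $c$-packed.

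I would first obtain an $O(1)$-factor estimate of $\dF(P,Q)$ in near-linear time and then binary-search over candidate distances $\Delta$ at resolution $(1+\varepsilon)$, yielding $O(\log(1/\varepsilon))$ decision calls, which is absorbed by the logarithmic factor in the target runtime. For each candidate $\Delta$, I would apply the standard greedy $\varepsilon\Delta$-simplification symmetrically to both $P$ and $Q$, producing curves $P'$ and $Q'$ with $\dF(P,P'),\dF(Q,Q')\leq\varepsilon\Delta$. Crucially, every edge of $P'$ and $Q'$ (except possibly the first and last) has length at least $\Omega(\varepsilon\Delta)$, and a $c$-packed curve's simplification remains $O(c)$-packed. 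Hence it suffices to answer the decision problem on $P',Q'$ up to an additive $O(\varepsilon\Delta)$ error.

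The crux of the proof is the structural theorem: \emph{if at least one of $P',Q'$ is $O(c)$-packed and all their non-boundary edges have length $\Omega(\varepsilon\Delta)$, then the $\Delta$-free space of $P'$ and $Q'$ contains only $O(c(n+m)/\varepsilon)$ non-empty cells.} For the symmetric case of both curves $c$-packed, the classical argument fixes an edge $e$ of one curve, invokes $c$-packedness of the other to bound the total length of its edges within distance $\Delta$ of $e$, and uses the edge-length lower bound to turn this length bound into an edge count; summed over $e$, this gives $O(c(n+m)/\varepsilon)$. In our setting, only one direction of this argument applies: $c$-packedness of $P'$ bounds the $P'$-edges near each $Q'$-edge, but does not bound the $Q'$-edges near each $P'$-edge. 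Summing in the ``easy'' direction produces a term $O(c|Q'|/(\varepsilon\Delta))$ that is vacuous a priori, since $|Q'|$ is unconstrained by the $c$-packedness of $P'$.

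The main obstacle is therefore to control the effective length of $Q'$ that can contribute non-empty free-space cells using only the $c$-packedness of $P'$. My strategy is to cover $P'$ with $O(|P'|/\Delta+n)$ balls of radius $\Theta(\Delta)$; by $c$-packedness, each ball contains $O(c/\varepsilon)$ edges of $P'$, and only $Q'$-edges lying within a constant factor of such a ball can contribute to cells involving those $P'$-edges. Charging each non-empty cell $(i,j)$ to the ball whose center is closest to the witness closest-point pair on $(e_i^{P'},e_j^{Q'})$, and exploiting the edge-length lower bound on $Q'$ together with the monotone structure of valid matchings (so that only $O(1)$ cells in any $Q'$-column are charged to a given ball), should yield the required $O(c(n+m)/\varepsilon)$ bound. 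Once this structural theorem is in place, standard reachability on the sparse free-space diagram decides the problem in $O(c(n+m)/\varepsilon)$ time per query, and the total runtime follows. Robustness is automatic: the simplification and reachability are symmetric in $P,Q$, so the algorithm never needs to know $c$ or which curve is $c$-packed.
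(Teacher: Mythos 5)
There is a genuine gap, and it sits exactly where the paper's main new idea lives. Your charging scheme bounds, for each ball $B$ of radius $\Theta(\Delta)$ in the cover of $P'$, the number of nearby $P'$-edges by $O(c/\eps)$ via packedness — fine — but it gives no control over how many \emph{$Q'$-edges} come near $B$, nor over how many balls a single $Q'$-edge visits. Concretely, a single edge of $Q$ of length $L\gg\Delta$ running alongside $P'$ at distance $\Delta$ produces $\Theta(cL/(\eps\Delta))$ non-empty cells in its column, and nothing in your argument bounds the total length of such edges: $Q$ is not packed, and the ``monotone structure of valid matchings'' does not help because the set of cells containing a free point (even the reachable ones) is a two-dimensional region, not a monotone path. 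The paper closes exactly this hole with a structural lemma you do not have: every edge of $Q$ of length more than $4\dF(P,Q)$ must, under an optimal matching, be shadowed by a subcurve of $P$ whose orthogonal projection covers most of that edge, so the set of \emph{long} edges of $Q$ is itself $6c$-packed. Only with this lemma does the symmetric charging argument (charge each long cell to the shorter of its two edges) go through; the remaining \emph{short} edges of $Q$ are each confined to a ball of radius $O(\dF(P,Q)+\Delta)$ and handled directly. Without some version of this inheritance lemma, your bound on long cells does not follow.

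Two further problems. First, your claimed bound of $O(c\,(n+m)/\eps)$ non-empty cells, independent of $\Delta$, is false: take $P$ to be $O(1)$-packed, consisting of $\Theta(\alpha/\eps)$ edges of length $\eps\Delta/4$ running along a segment of length $4\alpha\Delta$ followed by an excursion to distance $\alpha\Delta$, and let $Q$ traverse that segment back and forth $m/2$ times; then $\dF(P,Q)=\Theta(\alpha\Delta)$, yet $\Theta(\alpha m/\eps)$ cells contain a $\Delta$-free point, which exceeds $c(n+m)/\eps$ once $\alpha$ and $m$ are large. The correct bound (the paper's \cref{thm:symmetric_complexity_free_space}) carries a factor $(1+\alpha)$ with $\alpha=\dF(P,Q)/\Delta$, so the decider is only efficient when $\Delta=\Omega(\dF(P,Q))$. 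Second, your search strategy starts from an ``$O(1)$-factor estimate of $\dF(P,Q)$ in near-linear time,'' which is not an available primitive (no near-linear constant-factor approximation is known for general curves, and factors below $3$ are SETH-hard in strongly subquadratic time). The paper instead starts from a polynomial (or exponential) over-estimate and either halves $\Delta$ top-down so that every decider call has $\alpha=O(1)$, or equips the decider with a cell-budget timeout. Your skeleton is repairable along those lines, but as written both the structural theorem and the search reduction are unproven.
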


Our algorithm runs in near-linear time whenever at least one of the two input curves is $c=O(1)$-packed.
In particular, it matches the running time of the algorithm of Driemel, Har-Peled, and Wenk~\cite{driemel12realistic} up to a logarithmic factor. In contrast to their method, which assumes that \emph{both} curves are $c$-packed, our algorithm requires only one of the two curves to satisfy this condition.
Like the algorithm of~\cite{driemel12realistic}, our approach does not require prior knowledge of the constant $c$. Moreover, it does not require knowing \emph{which} of the two curves is $c$-packed. This robustness makes the algorithm particularly well-suited for practical scenarios, where the structural properties of input data may be unknown. Our algorithm is “almost as fast as possible” compared to conditional lower bounds that rule out running times of $O((d\, c \, n/\sqrt{\varepsilon})^{1-\delta})$ for any $\delta > 0$ when $m \in \Theta(n)$ and the dimension is at least five~\cite{bringmann14hardness}.

\mypar{Overview of techniques.}
By exploring only cells of the $\Delta$-free space that contain at least one $\Delta$-free point, one can easily decide whether the Fr\'echet distance between two curves is at most $\Delta$.
The number of cells with a $\Delta$-free point can be quadratic in general. Yet, for suitably simplified versions of the input curves, we prove in \cref{thm:symmetric_complexity_free_space} the structural property that if at least one of the input curves is $c$-packed, the $(1+\eps)\Delta$-free space has only $O{\left(c\, \frac{(1+\alpha)(n+m)}{\eps} \right)}$ cells that contain a $(1+\eps)\Delta$-free point, where $\alpha=\dF(P, Q)/\Delta$.
This mirrors a key insight by Driemel, Har-Peled, and Wenk~\cite{driemel12realistic}, but has two notable differences: (1) they assume that both curves are $c$-packed, and (2) their bound of $O(c\,\frac{n+m}{\eps})$ does not depend on~$\alpha$.
Our bounds asymptotically match when~$\Delta=\Omega(\dF(P, Q))$.

Driemel, Har-Peled, and Wenk~\cite{driemel12realistic} leveraged their structural insight to define an \emph{approximate decider:} a procedure that, given an approximation parameter $\eps>0$ and a distance threshold $\Delta$, determines whether $\dF(P, Q) > \Delta$ or $\dF(P, Q) \leq (1+\eps)\Delta$.
They used this approximate decider within a binary search framework to compute an overall $(1+\eps)$-approximation.
In contrast, the approximate decider implied by our bound is guaranteed to be efficient only when $\Delta \in \Omega(\dF(P, Q))$.

Even though our approximate decider is potentially inefficient when $\Delta$ is small compared to $\dF(P,Q)$, it can be used to efficiently compute a $(1+\eps)$-approximation of $\dF(P,Q)$.
We show two techniques to achieve this. 
The first method relies on a polynomial over-estimate $\Delta^+$ of $\dF(P, Q)$ (i.e., a value $\Delta^+$ such that $\dF(P, Q) \leq \Delta^+ \leq (n+m)^{O(1)} \dF(P, Q)$).
Various methods exist to compute a such a polynomial over-estimate in near-linear time~\cite{colombe21continuous_frechet,vanderhorst24faster_approximation,vanderhorst23subquadratic_frechet}, but these methods are quite complex.
For the sake of simplicity and self-containment, we additionally show a way to use a straightforward greedy algorithm by Bringmann and Mulzer~\cite{bringmann16approx_discrete_frechet} which computes an exponential over-estimate.

Our first method is careful and calls the approximate decider only with parameters $\Delta$ for which it is efficient; specifically,~$\Delta\geq \dF(P,Q)/2$.
It starts with $\Delta=\Delta^+$, halving $\Delta$ until the decider reports that $\dF(P,Q)>\Delta$.
This results in a range $[\Delta,2(1+\eps)\Delta]$ that contains $\dF(P,Q)$, which, using a similar approach, can further be refined to obtain a $(1+\eps)$-approximation.
We note that if $\Delta^+$ is a polynomial overestimation, $\Delta^+ \leq (n+m)^{O(1)} \cdot \dF(P, Q)$, then the first part of this process makes only $O(\log (n+m))$ calls to our approximate decider. If we instead rely on the simpler to compute exponential over-estimation of $\dF(P, Q)$ then this method is no longer efficient. 

Our second method is a more careless approach, which may call the approximate decider with values~$\Delta$ that are significantly less than $\dF(P, Q)$ -- resulting in an extremely slow decision.
In particular, let $\Delta^+ \leq 2^{O(n+m)} \dF(P, Q)$ and define $\Delta_i := \Delta^+ / (1+\eps)^i$.
Starting with $i=1$, repeatedly double $i$ and call the decider until it reports that $\dF(P,Q)>\Delta_i$.
This will happen after only $O(\log \log_{1+\eps} 2^{O(n+m)}) = O(\log \frac{n+m}{\eps})$ calls to the decider.
This approach returns a range $[\Delta_{2^i}, (1+\eps)\Delta_{2^{i-1}}]$ that contains $\dF(P, Q)$, which can further be refined to obtain a $(1+\eps)$-approximation.
The complication with this approach is that it may invoke our approximate decider with values $\Delta$ that are considerably smaller than $\dF(P, Q)$ -- thereby losing our guarantees on the complexity of the free space.
To work around this, we create a \emph{fallible approximate decider}.
Given estimates $(c', \Delta)$ of $(c, \dF(P, Q))$, this decider may report one of three things: (i) $\dF(P, Q) > \Delta$, (ii) $\dF(P, Q) \leq (1+\eps)\Delta$, or (iii) \emph{timeout}, indicating that the current estimates are too inaccurate.
Note that this decider requires not only an estimate of the \f distance, but also an estimate of the minimum packedness of the curves.
The timeout ensures that the fallible approximate decider has a running time of $O{\left(c \frac{n+m}{\eps}\right)}$ (as long as $c' \in O(c)$) regardless of~$\Delta$.
We use timeouts to guide updates to either $c'$ or $\Delta$, refining our estimates until we obtain a $(1+\eps)$-approximation of $\dF(P, Q)$.

\section{On simplifications and their free spaces}

Driemel, Har-Peled and Wenk~\cite{driemel12realistic} define the following greedy curve simplification of a curve $P$:

\begin{definition}[Greedy $\mu$-simplifications]
\label{def:greedy_simplification}
    \label{def:simplification_driemel}
    Let $\mu > 0$.
    The \emph{greedy $\mu$-simplification} of a curve $P = (p_1, \ldots, p_n)$ is
    the unique maximal subsequence $S = (s_1,\ldots, s_m)$ of $P$ such that $s_1=p_1$, and each~$s_j$ is the first vertex of $P$ after $s_{j-1}$ with distance at least $\mu$ from $s_{j-1}$.
\end{definition}

(Unlike~\cite{driemel12realistic}, we do not necessarily include the last vertex of $P$.
This does not affect their results.)
This class of simplifications has some particularly useful properties.
The simplifications are easily constructed in linear time.
Additionally, Driemel~\etal~\cite{driemel12realistic} show that the simplification retains (roughly) the packedness of the original curve, and is close to the original curve in terms of \f distance:

\begin{lemma}[Lemma~4.3 in~\cite{driemel12realistic}]
    \label{lem:simplificationIsClose}
    Let $P$ be $c$-packed in $\R^d$ with $n$ vertices.
    For any $\mu > 0$, the greedy $\mu$-simplification $S$ of~$P$ can be constructed in $O(dn)$ time.
    The curve $S$ is $6c$-packed, and $\dF(P, S) \leq \mu$.
\end{lemma}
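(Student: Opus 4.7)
The linear-time construction is immediate from a single greedy scan: starting with $s_1 = p_1$, walk through $p_2, p_3, \ldots$ computing distances to the current anchor $s_j$ and selecting the first vertex at distance $\geq \mu$ as $s_{j+1}$. Each vertex is inspected once, yielding $O(n)$ time total.

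For $\dF(P, S) \leq \mu$, I would exhibit an explicit matching between corresponding subcurves. Let $s_j = p_{i_j}$ and $s_{j+1} = p_{i_{j+1}}$. By the greedy rule, every intermediate vertex $p_{i_j+1}, \ldots, p_{i_{j+1}-1}$ lies in the ball $B(s_j, \mu)$, and by convexity of balls the entire polyline $P[i_j, i_{j+1}-1]$ lies in $B(s_j, \mu)$. I match this subcurve to the single point $s_j \in S$ (distance $\leq \mu$), then match the final edge $\overline{p_{i_{j+1}-1}\, p_{i_{j+1}}}$ of $P$ linearly to the edge $\overline{s_j s_{j+1}}$ of $S$. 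At parameter $t \in [0, 1]$ the two matched points differ by $(1-t)(p_{i_{j+1}-1} - s_j)$, whose norm is at most $(1-t)\mu \leq \mu$. Concatenating across all $j$, and handling the trailing subcurve $P[i_m, n] \subseteq B(s_m, \mu)$ the same way, yields a matching of width $\leq \mu$.

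For the $6c$-packedness bound, I fix a ball $B = B(z, r)$ and aim to show $|S \cap B| \leq 6cr$. My plan is to split the edges of $S$ intersecting $B$ into \emph{short} edges with $|e_j| \leq 2r$ and \emph{long} edges with $|e_j| > 2r$. If $e_j$ is short, then $e_j \subseteq B(z, 3r)$, so all vertices of the underlying subcurve $P[i_j, i_{j+1}]$ lie in $B(z, 3r + \mu)$ (using $p_{i_j+k} \in B(s_j, \mu) \subseteq B(z, 3r+\mu)$ and $s_{j+1} \in B(z, 3r)$) and hence the entire subcurve does too. Since $|e_j| \leq |P[i_j, i_{j+1}]|$ by the polyline triangle inequality and the subcurves are interior-disjoint, summing over short edges is bounded by the length of $P$ in $B(z, 3r + \mu)$, which $c$-packedness controls. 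Each long edge contributes at most $2r$ to $|S\cap B|$ by the chord bound, and I would bound their number by charging each to a disjoint piece of $P$ of length $\geq \mu$ residing in a suitably enlarged ball.

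The hard part will be the long-edge analysis, especially when $r$ is comparable to or smaller than $\mu$: since $|e_j| = \|s_j - s_{j+1}\| \geq \mu$ by construction, in this regime \emph{every} intersecting edge is long, and one cannot simply absorb $\mu$ into $r$. Handling it requires a direct counting argument, using that consecutive anchors are $\geq \mu$ apart to space them out, and invoking $c$-packedness of $P$ on a ball of radius $O(\max(r,\mu))$ to bound the number of such anchors. Tracking the constants through the enlargement factors and balancing the $r \geq \mu$ and $r < \mu$ cases is what delivers the clean constant $6c$.
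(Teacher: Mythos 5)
First, note that the paper does not prove this lemma at all: it is imported verbatim as Lemma~4.3 of Driemel, Har-Peled and Wenk~\cite{driemel12realistic}, so there is no in-paper proof to compare against. Judged on its own terms, your argument for the $O(n)$ construction and for $\dF(P,S)\leq\mu$ is correct and complete: the block-by-block matching (collapse $P[i_j,i_{j+1}-1]$ onto $s_j$, then interpolate the last edge of the block against $\overline{s_js_{j+1}}$, and absorb the trailing piece $P[i_m,n]$ into $s_m$) is exactly the standard witness, and it correctly handles the fact that this paper's variant of the simplification need not keep the last vertex of $P$.

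The $6c$-packedness claim, however, is left as a plan rather than a proof, and that is a genuine gap. You prove the short-edge contribution is at most $c(3r+\mu)\leq 5cr$ (using $\mu\leq 2r$ for short edges), which already consumes almost the entire $6cr$ budget, and you explicitly defer the long-edge count --- precisely the case that dominates when $r<\mu/2$, where \emph{every} edge of $S$ meeting $B(z,r)$ is long and your length-based charging gives nothing. Since the downstream algorithm hard-codes the constant (the timeout threshold $K_{c'}$ in \textsc{FallibleDecider} is derived from the $6c$ here and in \cref{lem:long_edges_packed}), "some constant $O(c)$" is not enough. The missing ingredient is the projection argument that this paper itself uses in \cref{lem:long_edges_packed}: for each edge $e_j=\overline{s_js_{j+1}}$ of $S$, the corresponding subcurve $\pi_j=P[i_j,i_{j+1}]$ lies within distance $\mu$ of $e_j$ and its orthogonal projection onto the line supporting $e_j$ covers $e_j$, so $\|e_j\cap B(z,r)\|\leq\|\pi_j\cap B(z,r+\mu)\|$; summing over the interior-disjoint $\pi_j$ gives $\|S\cap B(z,r)\|\leq\|P\cap B(z,r+O(\mu))\|\leq c\cdot(r+O(\mu))$ in one stroke, after which a single case split ($r$ large versus $r$ small, the latter handled by counting edges of length $\geq\mu$ through a small ball) yields the stated constant. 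Without carrying out that counting and the constant bookkeeping, the packedness bound --- the only substantive part of the lemma --- is not established.
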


Their primary result, which is central to existing algorithms for $c$-packed curves, is a structural lemma that bounds the complexity of the free space between the two simplified $c$-packed curves:

\begin{lemma}[Lemma~4.4 in~\cite{driemel12realistic}]
    \label{lem:symmetricFreespace}
    Let $P$ and $Q$ be two $c$-packed curves with $n$ and $m$ vertices, respectively.
    Let $\Delta \geq 0$ and $\eps \in (0, 1)$, and let $S_P$ and $S_Q$ be the greedy $\frac{\eps}{4} \Delta$-simplifications of $P$ and $Q$, respectively. Then only $O(c\, \frac{n+m}{\eps})$ parameter space cells of $S_P$ and $S_Q$ intersect $\FS[((1+\frac{\eps}{2})\Delta)](S_P, S_Q)$.
\end{lemma}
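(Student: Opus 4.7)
My plan is to count, for each edge $e$ of $S_P$, the number $N_e$ of edges $f$ of $S_Q$ such that the parameter-space cell corresponding to $(e,f)$ contains a $(1+\eps/2)\Delta$-free point; the total cell count equals $\sum_e N_e$. By \cref{lem:simplificationIsClose}, both $S_P$ and $S_Q$ are $6c$-packed, and by the greedy construction every one of their edges has length at least $\mu := \eps\Delta/4$. The cell $(e,f)$ meets the $(1+\eps/2)\Delta$-free space if and only if $f$ enters the tube $T_e := \{x \in \R^d : d(x,e) \leq (1+\eps/2)\Delta\}$ around $e$.

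To estimate $N_e$, I cover the slightly enlarged tube $T_e^+$ of radius $(1+\eps/2)\Delta + \mu \leq 2\Delta$ by $O(\ell(e)/\Delta + 1)$ Euclidean balls of radius $O(\Delta)$, with centres placed along $e$ at spacing $\Theta(\Delta)$. The $6c$-packedness of $S_Q$ bounds the length inside each ball by $O(c\Delta)$, so
\[
    |S_Q \cap T_e^+| \;=\; O{\bigl(c\,\ell(e) + c\,\Delta\bigr)}.
\]
The key observation is that every active edge $f$ contributes at least $\mu$ to this length: if $f \subseteq T_e^+$ then $\ell(f) \geq \mu$ does so immediately; otherwise $f$ has a point in $T_e$ and an endpoint outside $T_e^+$, and the subsegment of $f$ linking these two events must cross the radial gap of width $\mu$ between $T_e$ and the complement of $T_e^+$, hence has Euclidean length $\geq \mu$. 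Dividing the two estimates yields
\[
    N_e \;=\; O{\!\left(\tfrac{c\,\ell(e)}{\eps\,\Delta} \;+\; \tfrac{c}{\eps}\right)}.
\]

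Summing over the at most $n-1$ edges of $S_P$ gives $\sum_e N_e = O{\bigl(c\,\mathrm{len}(S_P)/(\eps\Delta) + c\,n/\eps\bigr)}$, and a symmetric argument with $e$ ranging over $S_Q$ produces the analogous bound in terms of $S_Q$ and $m$. The main obstacle is therefore to upgrade the $\mathrm{len}(S_P)/\Delta$ (or symmetric) term to $O(n+m)$. I would handle this by splitting each simplification's edges into ``short'' ones ($\ell(e) \leq 2\Delta$) and ``long'' ones ($\ell(e) > 2\Delta$): the short edges already contribute only $O(c(n+m)/\eps)$ in total via the per-edge bound. For a long edge $e = s_{i-1}s_i$ of $S_P$, the greedy construction forces the existence of a single original ``bridging'' edge of $P$ of length $\geq \ell(e) - \mu$, while every other vertex of $P$ in the corresponding arc lies in $B(s_{i-1},\mu)$. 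Charging the long $S_P$-edges against these disjoint bridging edges and invoking the $c$-packedness of $P$ (together with the symmetric charge for $S_Q$) should close the argument to $O{\bigl(c(n+m)/\eps\bigr)}$; this charging step is where I expect the main technical difficulty to lie.
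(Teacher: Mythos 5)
Your per-edge count is sound as far as it goes: the tube-covering argument correctly shows that an edge $e$ of $S_P$ has at most $N_e = O(c\,\ell(e)/(\eps\Delta) + c/\eps)$ active partners in $S_Q$, and for edges with $\ell(e) = O(\Delta)$ this already gives the right contribution. The genuine gap is in the long edges, and your proposed fix does not close it. Summing your bound over long edges produces a term of order $c\cdot\mathrm{len}(S_P)/(\eps\Delta)$, and that quantity is simply not bounded by $O(c\,(n+m)/\eps)$: a single segment of length $2^{n}\Delta$ is $2$-packed and survives greedy $\frac{\eps}{4}\Delta$-simplification unchanged, so $\mathrm{len}(S_P)/\Delta$ can be arbitrarily large while $n=2$. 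Your bridging-edge observation only shows that the long edges of $S_P$ inherit $O(c)$-packedness from $P$; but packedness is a per-ball constraint and can never control the \emph{total} length of the long edges over all of space, which is what your summation would require. So ``charging long edges to disjoint bridging edges of $P$ and invoking $c$-packedness'' cannot upgrade $\sum_{e\ \mathrm{long}}\ell(e)/\Delta$ to $O(n+m)$.

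The missing idea is to count long--long incidences per \emph{pair} rather than per edge of $S_P$, charging each active cell to the \emph{shorter} of its two edges. If $u$ is the shorter edge of an active pair, then every edge charging $u$ has length at least $\max\{\|u\|, \frac{\eps}{4}\Delta\}$ and intersects a ball of radius $\|u\|/2 + (1+\frac{\eps}{2})\Delta$ about the midpoint of $u$; because that radius is now comparable to the minimum length of the edges being counted, packedness bounds the number of chargers by $O(c(1+1/\eps))$ \emph{independently of} $\|u\|$, and summing over the $n+m$ chargeable edges gives the lemma. This is exactly the device used in the paper's proof of the asymmetric generalisation (\cref{lem:free_long_cells}, via \cref{lem:few_intersections_ball}); the symmetric statement you were asked about is itself cited from Driemel, Har-Peled and Wenk rather than reproved, but their proof is this same charging argument. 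Without the charge-to-shorter step, your argument does not yield the claimed $O(c\,\frac{n+m}{\eps})$ bound.
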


We extend \Cref{lem:symmetricFreespace} to the asymmetric setting, where only one curve is $c$-packed:

\begin{restatable}{thm}{symmetricFreespace}\label{thm:symmetric_complexity_free_space}
    Let $P$ and $Q$ be curves with $n$ and $m$ vertices, respectively.
    Let $\Delta > 0$ and $\eps \in (0, 1]$, and let $S_P$ and $S_Q$ be the greedy $\frac{\eps}{4}\Delta$-simplifications of $P$ and $Q$, respectively.
    If one of $P$ and $Q$ is $c$-packed, then at most $\left(27c + 48c \cdot \frac{1+\alpha}{\eps}\right) (n+m) = O{\left(c\, \frac{(1+\alpha)(n+m)}{\eps} \right)}$ parameter space cells of $S_P$ and $S_Q$ intersect $\FS[((1+\frac{\eps}{2})\Delta)](S_P, S_Q)$, where $\alpha = \dF(P, Q) / \Delta$.
\end{restatable}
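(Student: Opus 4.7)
The plan is to charge each free-space cell to a vertex of $S_P$ or $S_Q$ and to bound the number of charges per vertex using a combination of the $6c$-packedness of $S_Q$ and the Fréchet matching between $S_P$ and $S_Q$. By the symmetry of the bound I WLOG take $Q$ to be $c$-packed so that $S_Q$ is $6c$-packed by \cref{lem:simplificationIsClose}. The triangle inequality (combined with $\dF(P,S_P),\dF(Q,S_Q)\le \eps\Delta/4$, which holds for any greedy simplification) yields $\dF(S_P,S_Q) \le \delta := (\alpha+\eps/2)\Delta$. Fix a matching attaining this distance and let $\tau_i\subseteq S_Q$ be the subcurve matched to $e_P^i$.

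A cell corresponds to a pair $(e_P,e_Q)$ with $d(e_P,e_Q)\le (1+\eps/2)\Delta$; let $(p^*,q^*)$ be the realising pair. I charge the cell to: (i) an endpoint of $e_Q$ within $(1+\eps)\Delta$ of $e_P$; (ii) failing (i), an endpoint of $e_P$ within $(1+\eps)\Delta$ of $e_Q$; or (iii) neither, in which case the cell is an interior \emph{crossing} and a short convexity-of-distance calculation forces $L_{e_P},L_{e_Q}=\Omega(\eps\Delta)$.

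Type (ii) cells are straightforward: for any $v_P$ the $(1+\eps)\Delta$-ball around it meets at most $O(c/\eps)$ edges of $S_Q$ by the standard ball-covering argument applied to the $6c$-packed $S_Q$ (each ball of radius $O(\Delta)$ holds length $O(c\Delta)$, which translates to $O(c/\eps)$ edges thanks to the $\mu=\eps\Delta/4$ vertex spacing of $S_Q$); summing over $v_P$ gives $O(cn/\eps)$. For type (i) I bound the charges by double-counting, namely $\sum_{v_Q}|\{e_P:\|v_Q-e_P\|\le (1+\eps)\Delta\}|=\sum_{e_P}|\{v_Q:\|v_Q-e_P\|\le (1+\eps)\Delta\}|$, and split $e_P$'s by length. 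For a short edge ($L_{e_P}<4\delta$), its $(1+\eps)\Delta$-tube sits inside a ball of radius $O((1+\alpha)\Delta)$ whose $S_Q$-vertex count is $O(c(1+\alpha)/\eps)$, contributing $O(cn(1+\alpha)/\eps)$ in total. For a long edge I apply the matching: $|\tau_i|\ge L_{e_P}/2\ge 2\delta$, and if $e_P$ is within $(1+\eps)\Delta$ of some $v_Q$ then $\tau_i$ meets the ball of radius $(1+\eps)\Delta+\delta=O((1+\alpha)\Delta)$ around $v_Q$; extending $\tau_i$ by $\delta$ in each direction along $S_Q$ captures $\ge 2\delta$ of length inside a slightly larger ball, so disjointness of $\{\tau_i\}$ together with $6c$-packedness of $S_Q$ in that ball limits the number of long edges per $v_Q$ to $O(c/\eps)$, contributing $O(cm/\eps)$.

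Crossings (type (iii)) are the main obstacle: each has both edges of length $\Omega(\eps\Delta)$ and the closest-pair lying in the interiors. I would handle them by an analogous argument, charging each crossing to the nearest endpoint of either edge (whose distance to the other edge is controlled by $L_{e_P}/2,L_{e_Q}/2$) and using the edge-length lower bound together with the matching and $6c$-packedness of $S_Q$ to limit crossings per vertex. Summing the three contributions then yields the claimed $O(c(1+\alpha)(n+m)/\eps)$ bound. The hardest part will be realising the crossings argument without inflating the constants or introducing exponential dependence on~$d$; the key lever is that crossings are geometrically rigid (a perpendicular-foot configuration with gap $\le(1+\eps/2)\Delta$), which should let the same short/long split on $e_P$ carry through after a one-step geometric preprocessing.
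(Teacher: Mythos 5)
Your partition into types (i)--(iii) is a genuinely different decomposition from the paper's, and the parts you carry out are essentially sound: the vertex-centred ball arguments for types (i) and (ii), the double count split by the length of $e_P$, and the use of the disjoint matched subcurves $\tau_i$ to transfer the packedness of $S_Q$ to the long edges of $S_P$ all land within the required $O(c(1+\alpha)(n+m)/\eps)$ budget (that last step is the paper's \cref{lem:long_edges_packed} in embryonic form). The genuine gap is type (iii), which you correctly flag as the main obstacle but do not resolve --- and the mechanism you propose for it fails. For a crossing cell, by definition \emph{no} endpoint of either edge is within $(1+\eps)\Delta$ of the other edge, and the realising pair $(p^*,q^*)$ may lie far from all four endpoints. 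Charging the crossing to the endpoint of $e_P$ nearest $p^*$ only guarantees that this endpoint is within $L_{e_P}/2+(1+\tfrac{\eps}{2})\Delta$ of $e_Q$; counting the edges of $S_Q$ meeting a ball of that radius via \cref{lem:few_intersections_ball} yields $O(c\,L_{e_P}/(\eps\Delta))$ charges, which is unbounded because $L_{e_P}$ can be arbitrarily large (symmetrically for endpoints of $e_Q$). So ``limiting crossings per vertex'' cannot work. Crossings must be charged to the \emph{shorter edge} $u$ and counted with a ball centred at its \emph{midpoint}, of radius $\|u\|/2+O(\Delta)$: the point is that this radius is comparable to the length lower bound $\max\{\|u\|,\tfrac{\eps}{4}\Delta\}$ fed into \cref{lem:few_intersections_ball}, which is what caps the count at $O(c/\eps)$ per edge.

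For that edge-charging to go through you need \emph{both} families of edges occurring in crossings to be $O(c)$-packed, since the shorter edge may come from either curve. Packedness of $S_Q$ is free; for the crossing edges of the non-packed curve you must prove an analogue of the paper's \cref{lem:long_edges_packed} (long edges inherit packedness through the \f matching, via an orthogonal-projection argument). Your $\tau_i$ machinery can be pushed to do this, but note a second wrinkle: crossing edges of $S_P$ are only guaranteed length $>\eps\Delta$, not $\geq 4\delta$, so when $\alpha$ is large the matching transfers no usable length to them; these ``medium'' edges ($\eps\Delta<L_{e_P}<4\delta$) would have to be disposed of separately by the same midpoint-ball argument you used for short $e_P$, absorbing them into the $\alpha$-dependent term. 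None of this appears in the write-up, so the proof is incomplete at precisely the case that carries the difficulty. The paper sidesteps the three-way split entirely: it classifies only the edges of the non-packed curve as short ($\leq 4\dF(P,Q)$) or long, proves the long ones form a $6c$-packed set, and then covers every free cell with two uniform ball arguments (\cref{lem:free_short_cells,lem:free_long_cells}).
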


In the remainder of this section, we assume $P$ to be a $c$-packed curve with $n$ vertices and $Q$ to be an arbitrary curve with $m$ vertices.
We prove an asymmetric bound on the complexity of the free space, which then implies the bound of~\cref{thm:symmetric_complexity_free_space}.
Specifically, let $\Delta$ be given, and let $S$ be the greedy $\frac{\eps}{4}\Delta$-simplification of $P$. By~\cref{lem:simplificationIsClose}, $S$ is $6c$-packed.
We prove that at most $O{\left(c\, \frac{n+(1+\alpha)m}{\eps}\right)}$ parameter space cells of $S$ and $Q$ contain a $(1+\frac{\eps}{2})\Delta$-free point, where $\alpha = \dF(P, Q) / \Delta$.

\subsection{Partitioning \texorpdfstring{$Q$}{Q}.}

To prove our claimed bound on the complexity of the free space of $S$ and $Q$, we split the set of edges of $Q$ into two sets. The first set contains \emph{short} edges, whose arc length is at most $4\dF(P,Q)$. The second set contains the remaining \emph{long} edges, whose arc length is more than $4\dF(P,Q)$. Similarly to the edges, we call a cell $C_{i, j}$ of the parameter space of $S$ and $Q$ \emph{short} if its corresponding edge $Q[j, j+1]$ on $Q$ is short, and call $C_{i, j}$ \emph{long} otherwise.

We show that long edges inherit the $c$-packedness from $P$ (i.e., the total arc length of long edges in any disk of radius $r$ is at most $O(cr)$), while short edges are, in spirit, centers of disks, which can not be intersected by too many edges of $S$, as $S$ is $O(c)$-packed. Thus, there are at most $O{\left( c\, \frac{n+m}{\eps} \right)}$ pairs of edges, one from $S$ and one from $Q$, that are close to one another, and hence there are at most $O{\left( c\, \frac{n+m}{\eps} \right)}$ cells in the parameter space of $S$ and $Q$ that contain a $\frac{\eps}{2}\Delta$-free point.

\begin{figure}
    \centering
    \includegraphics{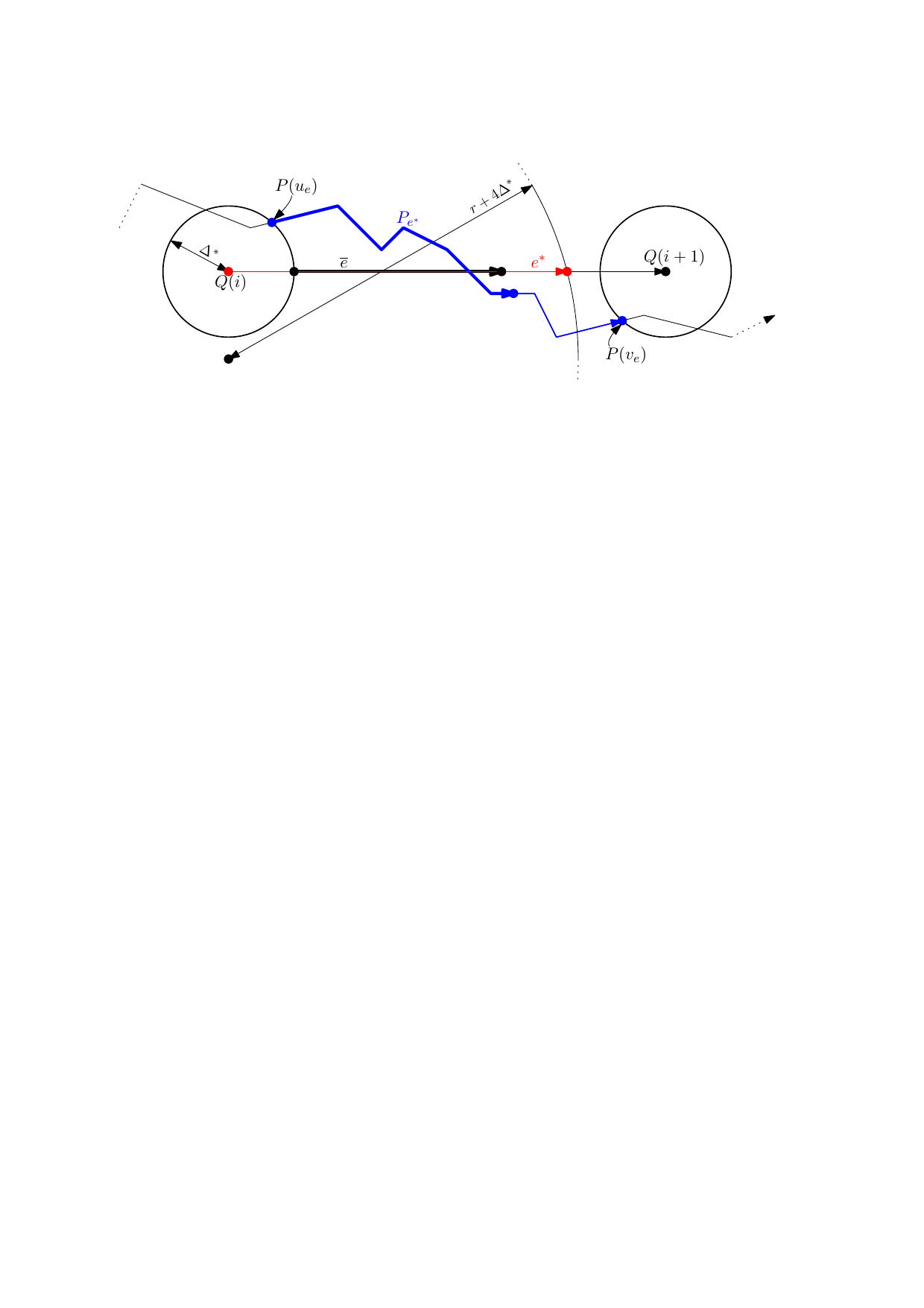}
    \caption{A subcurve $P_e=P[u_e,v_e]$ (in blue) is the minimal subcurve matched to $Q[i,i+1]$ under a matching attaining a distance of $\Delta^*$. The orthogonal projection of the matched subcurve $P_{e^*}$ (in fat blue) of $P_e$ to $e^*$ (in red) covers the segment $\overline{e}$ (in fat) entirely, and is contained in $B(p,r+4\Delta^*)$.}
    \label{fig:orthProj}
\end{figure}

\begin{lemma}
\label{lem:long_edges_packed}
    The set of long edges of $Q$ is $6c$-packed.
\end{lemma}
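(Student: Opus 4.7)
The plan is to transfer the $c$-packedness of $P$ onto the set of long edges of $Q$ via an optimal Fréchet matching. Let $\Delta^* := \dF(P, Q)$, fix an optimal matching, and for each edge $e$ of $Q$ let $P_e$ be the minimal subcurve of $P$ matched to $e$ under this matching; the $P_e$ are pairwise interior-disjoint and together cover $P$. Given an arbitrary ball $B = B(p, r)$, let $E$ denote the set of long edges of $Q$ intersecting $B$, and for $e \in E$ let $I_e := e \cap B$ with length $\ell_e$. The goal is to show $\sum_{e \in E} \ell_e \leq 6cr$.

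The key structural fact is that for each $e \in E$, the long-edge assumption $|e| > 4\Delta^*$, together with $\dF(e, P_e) \leq \Delta^*$, forces $|P_e| \geq |e| - 2\Delta^* > 2\Delta^*$, since the endpoints of $P_e$ are within $\Delta^*$ of the corresponding endpoints of $e$. I would split the argument into two regimes depending on the relative sizes of $r$ and $\Delta^*$.

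\emph{Regime 1 ($\Delta^* \leq r$):} For each $e \in E$ I construct a subcurve $P_e^\sharp \subseteq P_e$ matched to a slight enlargement of $I_e$ along $e$ by $\Delta^*$ on either side (where room allows; the long-edge condition guarantees there is room unless $I_e$ already contains an endpoint of $e$, in which case the matched endpoint of $P_e$ itself lies within $\Delta^*$ of $I_e$ and plays the same role). As depicted in \cref{fig:orthProj}, the orthogonal projection of $P_e^\sharp$ onto the supporting line of $e$ is then a connected set whose boundary projections sit on opposite sides of the endpoints of $I_e$, and therefore covers $I_e$ entirely. This yields $|P_e^\sharp| \geq \ell_e$. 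By the triangle inequality applied to the matching, $P_e^\sharp \subseteq B(p, r + 2\Delta^*) \subseteq B(p, 3r)$. Summing over $e \in E$, using disjointness of the $P_e$'s and the $c$-packedness of $P$,
\[ \sum_{e \in E} \ell_e \leq \sum_{e \in E} |P_e^\sharp| \leq |P \cap B(p, 3r)| \leq 3cr. \]

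\emph{Regime 2 ($\Delta^* > r$):} Here $\ell_e \leq 2r$ for each $e \in E$, so it suffices to bound $|E| \leq 3c$. Each $P_e$ has length exceeding $2\Delta^*$ and meets $B(p, r + \Delta^*) \subseteq B(p, 2\Delta^*)$ (as some point of $P_e$ is matched to a point of $I_e \subseteq B$). Starting from such an intersection point and walking along $P_e$ for up to $\Delta^*$ in each direction, I obtain a subcurve of $P_e$ of length at least $\Delta^*$ contained in $B(p, 3\Delta^*)$. Disjointness and the $c$-packedness of $P$ then give
\[ |E| \cdot \Delta^* \leq |P \cap B(p, 3\Delta^*)| \leq 3c\Delta^*, \]
so $|E| \leq 3c$ and $\sum_{e \in E} \ell_e \leq 6cr$.

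The main technical obstacle is the projection argument in Regime 1 when $I_e$ abuts an endpoint of $e$, where there is no room to enlarge $I_e$ on one side; the long-edge assumption is precisely what lets the matched endpoint of $P_e$ substitute for the missing enlargement without harming the constants.
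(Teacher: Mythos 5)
Your overall strategy---transferring the $c$-packedness of $P$ to the long edges of $Q$ via the minimal matched subcurves $P_e$, an orthogonal-projection argument, and a case split on $r$ versus $\Delta^*$---is the same as the paper's, and your Regime~2 is correct and essentially identical to the paper's second case. The gap is in Regime~1, in the claim that the projection of $P_e^\sharp$ onto the supporting line of $e$ covers $I_e$ entirely. First, your parenthetical is false: $|e|>4\Delta^*$ does not guarantee room to enlarge $I_e$ by $\Delta^*$ unless $I_e$ contains an endpoint of $e$; the endpoint of $I_e$ can be interior to $e$ yet lie, say, $0.1\Delta^*$ from $e$'s endpoint. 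Second, and more importantly, when the enlargement is blocked (fully or partially), the corresponding endpoint of $P_e^\sharp$ is only guaranteed to lie within distance $\Delta^*$ of the relevant endpoint of $I_e$ (or of $e$), and its projection may land strictly \emph{inside} $I_e$. So the matched endpoint does not ``play the same role'' as the enlargement: the projection can miss up to $\Delta^*$ of $I_e$ at each such end, and you only get $|P_e^\sharp|\geq \ell_e-2\Delta^*$, which is vacuous for edges that barely clip $B(p,r)$ near one of their endpoints. There can be $\Omega(cr/\Delta^*)$ such edges, so the total deficit is $\Omega(cr)$---the same order as the quantity being bounded. Hence the constant $6$ is not established, and even the asymptotic $O(cr)$ bound would require an additional counting argument (for long edges with an endpoint near $B(p,r)$) that you do not supply.

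The paper sidesteps exactly this issue by going inward rather than outward: it takes $e^*=e\cap B(p,r+4\Delta^*)$, \emph{truncates} it by $\Delta^*$ on each side to get $\overline e$, and uses $\|e^*\|\geq 4\Delta^*$ (which \emph{does} follow from the long-edge condition) to guarantee $\|\overline e\|\geq\|e^*\|/2$. The projection of the subcurve matched to $e^*$ then provably covers $\overline e$, at the cost of a factor $2$ that is absorbed into the final constant via the bound $\|e\cap B(p,r)\|\leq 2\|P_e\cap B(p,r+4\Delta^*)\|$. Replacing your enlargement with this inward truncation repairs Regime~1.
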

\begin{proof}
    This proof extends the proofs of \cite[Lemma 4.2 and 4.3]{driemel12realistic} and it is illustrated by \cref{fig:orthProj}.
    
    Take an arbitrary ball $B(p, r)$ and let $E$ denote the set of long edges of $Q$ that intersect $B(p, r)$.
    We show that the length of $E$ inside $B(p, r)$ is at most $6cr$.
 
    Let $\Delta^* = \dF(P, Q)$ and fix a $\Delta^*$-matching $(f,g)$ between $P$ and $Q$.
    For each edge $e=Q[i,i+1] \in E$, let $P_e=P[u_e,v_e]$ denote the minimal subcurve of $P$ matched to $e$ by $(f,g)$. More precisely, $u_e$ is the largest value such that there is a $t\in[0,1]$ such that $f(t)=u_e$ and $g(t)=i$, and $v_e$ is the smallest value such that there is a $t\in[0,1]$ such that $f(t)=v_e$ and $g(t)=i+1$. By minimality, all $[u_e,v_e]$ are disjoint except for possibly at their end points.

    Let us first observe that $\| e \cap B(p, r) \| \leq 2 \cdot \| P_e \cap B(p, r+4\Delta^*) \|$. See \cref{fig:orthProj}.
    Denote by $e^*$ the intersection between $e$ and the ball $B(p, r+4\Delta^*)$.
    Since $e$ intersects $B(p, r)$, we have that $e^*$ is a segment of length at least $4\Delta^*$.
    Let $\overline{e}$ be the segment $e^*$ after truncating it by $\Delta^*$ on either side.
    We have $\|\overline{e}\| = \|e^*\| - 2\Delta^* \geq \|e^*\| / 2$.
    Additionally, the segment $\overline{e}$ lies inside the ball $B(p, r+3\Delta^*)$.

    Let $P_{e^*}$ denote the subcurve of $P_e$ that is matched to $e^*$ by $(f, g)$.
    The orthogonal projection of $P_{e^*}$ onto the line supporting $e^*$ covers the segment $\overline{e}$ entirely.
    Moreover, because the orthogonal projection maps each point to its closest point on the supporting line of $e^*$, we additionally have that any point of $P_{e^*}$ that projects onto $\overline{e}$ lies within distance $\Delta^*$ of $\overline{e}$, and hence lies in the ball $B(p, r+4\Delta^*)$.
    The total length of the parts of $P_{e^*}$ that project onto $\overline{e}$ is at least $\|\overline{e}\| \geq \|e^*\| / 2 = \|e \cap B(p, r) \| / 2$, and as established lies inside $B(p,r+4\Delta^*)$.
    Thus we conclude that $\| e \cap B(p, r) \| \leq 2 \cdot \| P_e \cap B(p, r+4\Delta^*) \|$.

    Recall that all $[u_e, v_e]$ are disjoint, except for possibly at their endpoints.
    Hence, the total length of $P$ inside $B(p, r+4\Delta^*)$ is at least $\sum_{e \in E} \| P_e \cap B(p, r+4\Delta^*) \| \geq \| E \cap B(p, r) \| / 2$.
    It follows from the $c$-packedness of $P$ that $\| E \cap B(p, r) \| \leq 2c \cdot (r+4\Delta^*)$.

    If $r \geq 2\Delta^*$, then $2c \cdot (r+4\Delta^*) \leq 6cr$, proving that the length of $E$ inside $B(p, r)$ is at most $6cr$.
    If $r < 2\Delta^*$ instead, then observe that every edge in $E$ contributes at least $4\Delta^*$ to the length of $E$ inside $B(p, r+4\Delta^*)$, which is at most $2c \cdot (r+4\Delta^*) < 12c\Delta^*$ by the above.
    Hence $E$ contains at most $12c\Delta^* / 4\Delta^* = 3c$ edges.
    Each edge in $E$ contributes at most $2r$ to the length of $E$ inside $B(p, r)$.
    Thus, the length of $E$ inside $B(p, r)$ is at most $6cr$.
\end{proof}


\begin{lemma}
\label{lem:few_intersections_ball}
    Let $c \geq 0$ and $\ell \geq 0$.
    Let $E$ be a $c$-packed set of line segments with lengths at least $\ell$.
    The number of segments in $E$ that intersect a given ball of radius $r$ is at most $c \cdot (1+r/\ell)$.
\end{lemma}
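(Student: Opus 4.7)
The plan is to bound the number of segments by comparing a lower bound on how much each qualifying segment contributes to the total length inside a slightly enlarged ball $B(p, r+\ell)$ against the upper bound given by $c$-packedness of $E$.

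First, I would fix a segment $s \in E$ that intersects $B(p, r)$, pick any point $q \in s \cap B(p, r)$, and denote by $d_1, d_2 \geq 0$ the distances from $q$ along $s$ to the two endpoints. Since $\|s\| \geq \ell$, we have $d_1 + d_2 \geq \ell$. Any point of $s$ within arc-length distance $\ell$ from $q$ lies in $B(p, r+\ell)$ by the triangle inequality, so the portion of $s$ inside $B(p, r+\ell)$ has length at least
\[
\min(\ell, d_1) + \min(\ell, d_2) \;\geq\; \min(\ell,\, d_1 + d_2) \;=\; \ell.
\]
That short observation is the main (and essentially only) geometric step; the rest is a counting argument.

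Next, summing this lower bound over all segments in $E$ that intersect $B(p, r)$, and calling this count $k$, gives
\[
\sum_{s \in E} \|s \cap B(p, r+\ell)\| \;\geq\; k\cdot \ell.
\]
On the other hand, the $c$-packedness of $E$ directly yields $\sum_{s \in E} \|s \cap B(p, r+\ell)\| \leq c(r+\ell)$. Combining the two inequalities and dividing by $\ell$ gives $k \leq c(r+\ell)/\ell = c(1 + r/\ell)$, as claimed. The only case needing a separate word is $\ell = 0$, where the stated bound is vacuous.

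I do not expect any real obstacle; the only subtlety is making sure the ``arc-length at most $\ell$ from $q$'' portion of $s$ really is contained in $B(p, r+\ell)$ (immediate from the triangle inequality $\|x-p\| \leq \|x-q\| + \|q-p\| \leq \ell + r$) and that the two-sided arc-length contribution is at least $\ell$ even when $q$ is near an endpoint of $s$, which is exactly what the inequality $\min(\ell,d_1)+\min(\ell,d_2) \geq \min(\ell, d_1+d_2)$ handles.
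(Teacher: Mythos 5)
Your proof is correct and follows essentially the same route as the paper: each segment meeting $B(p,r)$ contributes length at least $\ell$ inside $B(p,r+\ell)$, and $c$-packedness then caps the count at $c(r+\ell)/\ell$. The paper states the key geometric step without justification, while you spell it out via the $\min(\ell,d_1)+\min(\ell,d_2)\geq\ell$ argument; that is just added detail, not a different approach.
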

\begin{proof}
    Fix a ball $B(p,r)$.
    Any segment in $E$ that intersects $B(p, r)$, intersects $B(p, r+\ell)$ in a segment of length at least $\ell$.
    Since $E$ is $c$-packed, at most
    \(
        c \cdot \frac{r+\ell}{\ell} = c \cdot (1+r/\ell)
    \)
    such segments exist.
\end{proof}

\subsection{Bounding the \texorpdfstring{$(1+\frac{\eps}{2})\Delta$-}{}free space complexity. }

To simplify notation, let $\Delta_S = (1+\frac{\eps}{2})\Delta$.
\Cref{lem:free_short_cells,lem:free_long_cells} bound how many short and long cells intersect the $\Delta_S$-free space of $S$ and~$Q$.

\begin{lemma}
\label{lem:free_short_cells}
    At most $(12c + 24c / \eps + 48c\alpha / \eps) \cdot m$ short cells contain a $\Delta_S$-free point.
\end{lemma}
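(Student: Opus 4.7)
My plan is to charge each short cell to its edge of $Q$ and then bound, for a fixed short edge of $Q$, how many edges of $S$ can come near it. Concretely, fix a short edge $e_j := Q[j,j+1]$ and let $C_{i,j}$ be a short cell containing a $\Delta_S$-free point $(x,y)$. Since $e_j$ is short, $\|e_j\| \le 4\dF(P,Q) = 4\alpha\Delta$. Pick any representative point $q_j \in e_j$ (the midpoint is convenient). Every point on $e_j$ is within distance $\|e_j\|/2 \le 2\alpha\Delta$ of $q_j$, so the triangle inequality gives
\[
\|S(x) - q_j\| \;\le\; \|S(x) - Q(y)\| + \|Q(y) - q_j\| \;\le\; \Delta_S + 2\alpha\Delta \;=\; \bigl(1+\tfrac{\eps}{2}+2\alpha\bigr)\Delta.
\]
Hence the edge $S[i,i+1]$ intersects the ball $B\bigl(q_j,\;(1+\tfrac{\eps}{2}+2\alpha)\Delta\bigr)$.

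The next step is to count the possible $i$'s for this fixed $j$. By \Cref{lem:simplificationIsClose}, $S$ is $6c$-packed, and because $S$ is the greedy $\tfrac{\eps}{4}\Delta$-simplification, every edge of $S$ has length at least $\ell := \tfrac{\eps}{4}\Delta$. Thus I apply \Cref{lem:few_intersections_ball} with this $\ell$ and $r = (1+\tfrac{\eps}{2}+2\alpha)\Delta$, which yields at most
\[
6c\left(1 + \frac{r}{\ell}\right) \;=\; 6c\left(1 + \frac{4+2\eps+8\alpha}{\eps}\right)
\]
edges of $S$ meeting the relevant ball. Plugging in $\eps \le 1$ and simplifying yields a bound of the form $12c + 24c/\eps + 48c\alpha/\eps$ on the number of indices $i$ for which $C_{i,j}$ can contain a $\Delta_S$-free point.

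Finally, I sum over the short edges of $Q$. There are at most $m$ such edges, so the total number of short cells containing a $\Delta_S$-free point is at most $(12c + 24c/\eps + 48c\alpha/\eps)\, m$, as claimed.

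The proof is essentially routine once the setup is correct; the only subtle point is the choice of the representative $q_j$ and making sure the $2\alpha\Delta$ "slack" from the length of a short edge appears additively in the radius rather than multiplicatively. One must also remember to use the greedy simplification's edge-length lower bound $\ell = \tfrac{\eps}{4}\Delta$ (not just its $6c$-packedness) when invoking \Cref{lem:few_intersections_ball}; this is what converts the $c$-packedness of $S$ into a per-ball intersection count, and is the reason the short edges of $Q$ — despite not themselves being packed — cause no trouble.
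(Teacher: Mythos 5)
Your proof is correct and follows essentially the same route as the paper: charge each short cell to its edge of $Q$, enclose the $\Delta_S$-neighbourhood of a short edge in a ball of radius $2\dF(P,Q)+\Delta_S$ about its midpoint, and apply \cref{lem:few_intersections_ball} to the $6c$-packed simplification $S$ with edge-length lower bound $\frac{\eps}{4}\Delta$. (Both your computation and the paper's actually give $6c(3+4/\eps+8\alpha/\eps)=18c+24c/\eps+48c\alpha/\eps$ rather than the stated $12c+\cdots$, a harmless constant-factor discrepancy that does not affect the result.)
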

\begin{proof}
    Consider a short edge $\overline{q_j q_{j+1}}$ of $Q$.
    For any $i$, the cell $C_{i, j}$ contains a $\Delta_S$-free point if and only if the edge $\overline{s_i s_{i+1}}$ of $S$ contains a point within distance $\Delta_S$ of some point on $\overline{q_j q_{j+1}}$.
    Let $\Delta^* = \dF(P, Q)$ and consider the ball $B(q, 2\Delta^* + \Delta_S)$, centered at the midpoint $p$ of $\overline{q_j q_{j+1}}$.
    This ball contains all points within distance $\Delta_S$ of $\overline{q_j q_{j+1}}$.
    By \Cref{def:greedy_simplification}, all edges of $S$ have length at least $\frac{\eps}{4}\Delta$. We obtain from~\cref{lem:few_intersections_ball} that at most
    \[
        6c \cdot \left(1 + \frac{2\Delta^* + \Delta_S}{\frac{\eps}{4}\Delta} \right) \leq 6c \cdot \left(1 + \frac{2\alpha\Delta + (1+\frac{\eps}{4}\Delta)}{\frac{\eps}{4}\Delta} \right)
        \leq 6c \cdot (2 + 4/\eps + 8\alpha / \eps)
    \]
    edges of $S$ intersect $B(q, 2\Delta^* + \Delta_S)$.
    Thus, there are at most $12c + 24c / \eps + 48c\alpha / \eps$ short cells $C_{i, j}$ containing a $\Delta_S$-free point.
    Summing over all edges of $Q$ proves the claim.
\end{proof}

\begin{lemma}
\label{lem:free_long_cells}
    Fewer than $(15c + 24c / \eps) \cdot (n+m)$ long cells contain a $\Delta_S$-free point.
\end{lemma}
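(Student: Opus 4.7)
The plan is to charge each free long cell $C_{i,j}$, corresponding to an $S$-edge $\overline{s_i s_{i+1}}$ of length $\ell_i^S$ and a long $Q$-edge $\overline{q_j q_{j+1}}$ of length $\ell_j^Q$ at mutual distance at most $\Delta_S$, to whichever of the two edges is shorter. This splits the count into two symmetric sub-problems, each of which exploits one of the two available $6c$-packedness properties: of $S$ (by \cref{lem:simplificationIsClose}) and of the long edges of $Q$ (by \cref{lem:long_edges_packed}).

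In Case~A, where $\ell_i^S \leq \ell_j^Q$, I would charge the pair to $i$ and consider the ball $B_i$ of radius $\ell_i^S/2 + \Delta_S$ around the midpoint of $\overline{s_i s_{i+1}}$. By the triangle inequality this ball contains every point within distance $\Delta_S$ of the segment, so every close $\overline{q_j q_{j+1}}$ must intersect $B_i$. The subset of long $Q$-edges of length at least $\ell_i^S$ is $6c$-packed with minimum length $\ell_i^S \geq \eps\Delta/4$, so \cref{lem:few_intersections_ball} bounds the count of such edges intersecting $B_i$ by $6c(1 + (\ell_i^S/2 + \Delta_S)/\ell_i^S)$, which is $O(c/\eps)$. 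Summing over $i$ yields $O(cn/\eps)$ pairs in this case.

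In Case~B, where $\ell_i^S > \ell_j^Q$, I would charge to $j$ and argue symmetrically using the ball $B_j$ of radius $\ell_j^Q/2 + \Delta_S$ around the midpoint of $\overline{q_j q_{j+1}}$. The crucial point is that instead of invoking \cref{lem:long_edges_packed} directly (which only supplies a minimum length $4\alpha\Delta$ that can be arbitrarily small), I would use the packedness of $S$ restricted to $S$-edges of length strictly greater than $\ell_j^Q$; this restricted set is $6c$-packed with effective minimum length $\max(\eps\Delta/4,\ell_j^Q)$. A short case-split on whether $\ell_j^Q$ exceeds $\eps\Delta/4$ shows that $(\ell_j^Q/2 + \Delta_S)/\max(\eps\Delta/4,\ell_j^Q)$ is $O(1/\eps)$ regardless, giving $O(c/\eps)$ $S$-edges per $j$ and $O(cm/\eps)$ pairs overall.

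The main obstacle is ensuring the final bound is independent of $\alpha$, which the direct use of \cref{lem:long_edges_packed} in Case~B would not achieve, since the minimum length of long edges collapses as $\alpha \to 0$. The restriction trick resolves this by working against $S$, whose minimum edge length $\eps\Delta/4$ is $\alpha$-independent, absorbing small $\ell_j^Q$ into the larger minimum-length bound. Summing the two cases and carefully tracking constants using $\Delta_S = (1+\eps/2)\Delta$ then recovers the stated bound of $(15c + 24c/\eps)(n+m)$.
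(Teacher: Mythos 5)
Your proposal is correct and follows essentially the same route as the paper: charge each free long cell to the shorter of its two edges, surround that edge's midpoint with a ball of radius $\|u\|/2+\Delta_S$, and apply \cref{lem:few_intersections_ball} to the $6c$-packed set of charging edges restricted to length at least $\max\{\|u\|,\frac{\eps}{4}\Delta\}$, which is exactly how the paper keeps the bound independent of $\alpha$. The only difference is presentational — the paper treats your Cases A and B uniformly by letting $u$ range over edges of both $S$ and the long edges of $Q$ at once.
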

\begin{proof}
    Consider a long cell $C_{i, j}$ that contains a $\Delta_S$-free point.
    Its corresponding edges $\overline{s_i s_{i+1}}$ and $\overline{q_j q_{j+1}}$ contain a pair of points that are within distance $\Delta_S$ of each other.
    We charge the cell $C_{i, j}$ to the shorter of its corresponding edges.
    We claim that no edge can be charged too often.

    Let $E$ be the set of long edges of $Q$.
    Let $u$ be an edge of either $S$ or $E$.
    Any edge $v$ that charges $u$ has length at least $\max\{\|u\|, \frac{\eps}{4}\Delta\}$, since one of $u$ and $v$ is an edge of~$S$, and therefore has length at least $\frac{\eps}{4}\Delta$.
    Let $p$ be the midpoint of $u$ and consider the ball $B(p, r)$ of radius $r = \|u\| / 2 + \Delta_S$.
    Every edge that charges $u$ intersects $B(p, r)$.
    By~\cref{lem:simplificationIsClose,lem:long_edges_packed}, both $S$ and $E$ are $6c$-packed.
    We therefore obtain from~\cref{lem:few_intersections_ball} that at most
    \[
        6c \cdot \left(1 + \frac{\|u\|/2 + \Delta_S}{\max\{\|u\|, \frac{\eps}{4}\Delta\}} \right) < 6c \cdot \left( 1 + \frac{1}{2} + \frac{(1+\frac{\eps}{4})\Delta}{\frac{\eps}{4} \Delta} \right) \leq 15c + 24 c / \eps
    \]
    edges of $S$, and similarly $E$, that are longer than $u$ intersect $B(p, r)$.
    Thus, $u$ is charged fewer than $15c + 24c / \eps$ times.
    The claim follows by summing over all edges of $S$ and long edges of $Q$.
\end{proof}

From \cref{lem:free_short_cells,lem:free_long_cells}, it follows that at most $(15c + 24c/\eps) \cdot n + (27c + 48 c / \eps + 48 c\alpha / \eps) \cdot m$ parameter space cells of $S$ and $Q$ intersect $\FS[(1+\frac{\eps}{2})\Delta](S, Q)$.
Because the greedy simplification of a curve has at most as many vertices of the original curve, we obtain~\cref{thm:symmetric_complexity_free_space}:

\symmetricFreespace*

\section{Approximating the Fr\'echet distance when one curve is \texorpdfstring{$c$}{c}-packed}

We present two algorithms for computing a $(1+\eps)$-approximation of the \f distance between two curves $P$ and $Q$.
Let $P$ have $n$ vertices and $Q$ have $m$ vertices, and suppose one of $P$ and $Q$ is $c$-packed for some unknown value $c$.
Let $\eps \in (0, 1]$ denote an approximation parameter.

Let $\Delta$ denote any estimate of~$\dF(P, Q)$ and define $\alpha := \frac{\dF(P, Q)}{\Delta}$.
Let $S_P$ and $S_Q$ be the greedy $\frac{\eps}{4}\Delta$-simplifications of $P$ and $Q$, respectively.
We proved in~\cref{thm:symmetric_complexity_free_space} that the number of cells in the parameter space of $S_P$ and $S_Q$ that contain a $(1+\frac{\eps}{2})\Delta$-free point is at most
\[
    K_{c}(\Delta) := \left(27c + 48c \cdot \frac{1+\alpha}{\eps}\right) (n+m).
\]
Our overarching approach is now straightforward:
We compute $S_P$ and $S_Q$ in linear time. 
If either $P$ or $Q$ is $c$-packed, the argument above guarantees that at most $K_c(\Delta)$ cells in the parameter space of $S_P$ and $S_Q$ are $\Delta$-reachable in $\FS[((1+\frac{\eps}{2})\Delta)](S_P, S_Q)$.
Thus, by exploring the entire $\Delta$-reachable subset of $\FS[((1+\frac{\eps}{2})\Delta)](S_P, S_Q)$, we can decide whether $\dF(P, Q) \leq (1+\eps)\Delta$ or $\dF(P, Q) > \Delta$ in $O(d K_{c}(\Delta))$ time.
We present two algorithms that use this principle to compute a $(1+\eps)$-approximation for $\dF(P, Q)$. 
Our first algorithm is the most simple. It starts with a polynomial over-estimation of $\dF(P, Q)$. That is, a value $\Delta^+$ such that $\dF(P, Q) \leq \Delta^+ \leq (n+m)^{O(1)}$. 
There exist near-linear algorithms to compute such an approximation, see~\cite{colombe21continuous_frechet,vanderhorst24faster_approximation, vanderhorst23subquadratic_frechet}. However, these methods are involved.
Our second approach is more involved, but it is more self-contained in the sense that it uses an exponential over-estimation of $\dF(P, Q)$ instead, which can be computed with a straightforward greedy algorithm (originally from~\cite{bringmann16approx_discrete_frechet}).

\subsection{A simple packedness-oblivious algorithm.}

Our first algorithm (Algorithm~\ref{alg:polynomial}) starts with an over-estimate $\Delta^+$ of $\dF(P, Q)$. We aggressively halve this interval until $[\Delta^+/2, \Delta^+]$ is a constant range containing $\dF(P, Q)$.
We then binary search over this interval.
Notably, our algorithm does not explicitly use the quantity $K_{c}(\Delta)$. We simply guarantee through $K_c(\Delta)$ that each iteration takes $O{\left( c \, \frac{n+m}{\eps} \right)}$ time.

\begin{algorithm}[H]\caption{$(1+\eps)$-approximation algorithm for the Fréchet distance given an over-estimate $\Delta^+$.
%
}
\label{alg:polynomial}
\begin{algorithmic}[1]

\Procedure{ApproxFréchetDistance}{$P,Q,\Delta^+, \eps$}
    \State $\eps' \gets \eps / 3$
    \State Linearly scan for the smallest $b\in\mathbb{N}$, such that $\Call{ApproxDecider}{P,Q,\Delta^+/2^b,\eps'}\neq\Yes$
    \State Let $\Delta_0=\Delta^+/2^{b-1}$, and denote by $\Delta_i$ the value $\Delta_0/(1+\eps')^i$
    \State    Binary search for $i\in[\lceil\log_{1+\eps'}2\rceil]$, such that $\Call{ApproxDecider}{P,Q,\Delta_{i-1},\eps'} = \Yes$\\
    \hphantom{Binary search for $i\in[\lceil\log_{1+\eps'}2\rceil]$, such that}\neghphantom{ and} and $\Call{ApproxDecider}{P,Q,\Delta_{i\hphantom{-1}},\eps'} = \No$
    \State \Return $(1+\eps')\Delta_{i-1}$
\EndProcedure
\Statex\vspace{-1.5ex}
\Procedure{ApproxDecider}{$P,Q,\Delta, \eps'$}
    \State $(S_P, S_Q) \gets$ the greedy $\frac{\eps}{4} \Delta$-simplifications of $P$ and $Q$
    \State \textbf{if} $\dF(S_P,S_Q)\leq (1+\frac{\eps'}{2})\Delta$ \textbf{then} \Return \Yes \textbf{else} \Return \No
\EndProcedure
\end{algorithmic}
\end{algorithm}

\begin{lemma}
\label{lemm:with_polynomial_estimate}
Let $P$ and $Q$ be polygonal curves with $n$ and $m$ vertices, respectively.
Let $P$ or $Q$ be $c$-packed for some unknown $c$.
Given values $\Delta^+$ with $\dF(P, Q) \leq \Delta^+$ and $\eps \in (0, 1]$, Algorithm~\ref{alg:polynomial} computes a value $\Delta$ with $\dF(P, Q) \leq \Delta < (1+\eps)^2\dF(P,Q)$ in $O{\left(d \cdot c\, \frac{n+m}{\eps}\left(1+\log \left( \frac{\Delta^+}{\eps \cdot \dF(P, Q)}\right) \right) \right)}$ time.
\end{lemma}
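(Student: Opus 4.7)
The plan is to prove correctness and runtime separately, both resting on the structural bound of \cref{thm:symmetric_complexity_free_space}. First I would characterise the decider's output: by \cref{lem:simplificationIsClose}, $\dF(P,S_P) \leq \frac{\eps'}{4}\Delta$ and $\dF(Q,S_Q) \leq \frac{\eps'}{4}\Delta$, so the triangle inequality yields that a \Yes answer implies $\dF(P,Q) \leq (1+\eps')\Delta$ while a \No answer implies $\dF(P,Q) > \Delta$. Since $\dF(P,Q) \leq \Delta^+$, the decider returns \Yes on $\Delta^+$, hence the linear scan terminates with some $b \geq 1$, and $\Delta_0 := \Delta^+/2^{b-1}$ is well defined with $\dF(P,Q) \in (\Delta_0/2,\,(1+\eps')\Delta_0]$. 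For the binary-search range $\Delta_i = \Delta_0/(1+\eps')^i$, the answer at $i=0$ is \Yes, while at $i = \lceil\log_{1+\eps'}2\rceil$ we have $\Delta_i \leq \Delta_0/2 < \dF(P,Q)$, forcing \No. A standard binary search then locates a transition index $i^\ast$, and the returned value $\Delta = (1+\eps')\Delta_{i^\ast-1}$ satisfies
\[
    \dF(P,Q) \;\leq\; \Delta \;=\; (1+\eps')^2 \Delta_{i^\ast} \;<\; (1+\eps')^2 \dF(P,Q) \;\leq\; (1+\eps)^2 \dF(P,Q).
\]

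For the runtime, I would analyse a single decider call: it builds the two simplifications in linear time, then tests $\dF(S_P,S_Q) \leq (1+\eps'/2)\Delta$ by exploring the $\Delta$-reachable portion of $\FS[((1+\eps'/2)\Delta)](S_P,S_Q)$ one cell at a time. By \cref{thm:symmetric_complexity_free_space}, at most $O(c(1+\alpha)(n+m)/\eps')$ cells in the parameter space of $S_P$ and $S_Q$ contain a $(1+\eps'/2)\Delta$-free point, where $\alpha = \dF(P,Q)/\Delta$. The crucial step is to show $\alpha = O(1)$ at every decider call made by Algorithm~\ref{alg:polynomial}. In the linear scan, every \Yes call uses $\Delta \geq \Delta_0 \geq \dF(P,Q)/(1+\eps')$, while the terminating \No call uses $\Delta = \Delta_0/2 \geq \dF(P,Q)/(2(1+\eps'))$. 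In the binary search, $\Delta_i \geq \Delta_0/(2(1+\eps'))$ by the choice of range. Hence every decider call runs in $O(c(n+m)/\eps)$ time.

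Finally I would count the calls. The linear scan makes $b+1$ of them, and $b$ is the smallest integer with $\Delta^+/2^b < \dF(P,Q)/(1+\eps')$, so $b = O(\log(\Delta^+/\dF(P,Q)))$. The binary search makes $O(\log\log_{1+\eps'}2) = O(\log(1/\eps))$ calls. Multiplying the number of calls by the per-call cost gives the stated bound $O(c(n+m)/\eps \cdot (1+\log(\Delta^+/(\eps\cdot\dF(P,Q)))))$. The main subtlety is a bookkeeping one: verifying that the terminating \No call in the linear scan is still efficient. This follows because halving $\Delta$ between consecutive iterations keeps the \No threshold within a factor $2$ of the previous \Yes threshold, so $\alpha$ stays $O(1)$ throughout and \cref{thm:symmetric_complexity_free_space} yields an efficient decider even on the \No call.
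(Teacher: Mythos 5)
Your proof is correct and follows essentially the same route as the paper's: the same triangle-inequality characterisation of \textsc{ApproxDecider}, the same bracketing of $\dF(P,Q)$ by the halving scan followed by a binary search over $[\lceil\log_{1+\eps'}2\rceil]$, and the same observation that every decider call has $\alpha=O(1)$ so that \cref{thm:symmetric_complexity_free_space} bounds each call by $O(c\,\frac{n+m}{\eps})$. The one spot where you (like the paper) gloss is the claim that $\Delta_i \leq \Delta_0/2 < \dF(P,Q)$ \emph{forces} a \No at the right end of the binary-search range --- an approximate decider may still answer \Yes there unless $(1+\eps')\Delta_i < \dF(P,Q)$ --- but this is an off-by-one in the stated search range inherited from the algorithm itself rather than a gap in your argument, and your treatment is otherwise at least as careful as the paper's.
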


\begin{proof}
$\Call{ApproxDecider}{P, Q, \Delta, \eps'}$ either decides that $\dF(P, Q) \leq (1+\eps') \Delta$ or $\dF(P, Q) > \Delta$.
It follows that $\Delta_0$ is a $2(1+\eps')$-approximation of $\dF(P, Q)$.
Via the same argument, the algorithm outputs, after the binary search for $i$, a $(1+\eps')^2$-approximation.
This is a $(1+\eps)$-approximation since $(1+\eps')^2 = (1+\frac{\eps}{3})^2 \leq 1+\eps$.
It remains to analyse the running time.
Since we decrease $b$ until $\frac{\Delta^+}{2}$ is smaller than $\dF(P, Q)$, this scan has at most $O(\log \frac{\Delta^+
}{\dF(P, Q)})$ iterations. 
Binary search for $i \in \lceil \log_{1+\eps} 2 \rceil$ takes $O(\log \log_{1+\eps} 2) = O(\log \eps^{-1})$ iterations. 
Finally, our approach guarantees that ApproxDecider is always invoked with a value $\Delta \geq \frac{\dF(P, Q)}{2}$.
By~\cref{thm:symmetric_complexity_free_space}, the free-space has at most $O(K_c(\Delta))$ $\Delta$-reachable cells.
The value $\alpha$ in the function of $K_c(\Delta)$ is, by our choice of $\Delta$, at most two.
We compute the $\Delta$-free space inside any cell in $O(d)$ time, resulting in a running time of $O(d K_c(\Delta))$.
\end{proof}

This result can be combined with a linear approximation for $\dF(P, Q)$.
E.g., Colombe and Fox~\cite[Corollary~4.4]{colombe21continuous_frechet}, which runs in $O(d^2 \cdot (n+m) \log (n+m)) + 2^{O(d)} (n+m)$ time.
This results in \cref{thm:weaker_main}, which is weaker than~\cref{thm:main} because of the super-linear dependency on the dimension~$d$.

\begin{thm}\label{thm:weaker_main}
    Let $P$ and $Q$ be polygonal curves in $\R^d$ with $n$ and $m$ vertices, respectively.
    Let $P$ or $Q$ be $c$-packed for some unknown $c$.
    For any $\eps \in (0, 1]$, there exists an algorithm that computes a value $\Delta$ with $\dF(P, Q) \leq \Delta \leq (1+\eps)\dF(P, Q)$ in time $O{\left(d \cdot c\, \frac{n+m}{\eps} \log \left(\frac{n+m}{\eps}\right) + d^2 \cdot (n+m) \log (n+m) \right)} + 2^{O(d)} (n+m)$.
\end{thm}

\subsection{A more self-contained algorithm.}

Finally, we show an alternative algorithm, to prove Theorem~\ref{thm:main}. 
In particular, we use the straightforward greedy algorithm by Bringmann and Mulzer~\cite{bringmann16approx_discrete_frechet} to compute an exponential overestimation $\Delta^+$ of $\dF(P, Q)$, rather than a more complex linear over-estimation.
This makes our approach more self-contained, as we can fully specify all required algorithms.

We define what we call a \emph{fallible decider} (subroutine \textsc{FallibleDecider} in~\cref{alg:approximation}).
Intuitively, our fallible decider receives as input an estimate $\Delta$ of $\dF(P, Q)$ and an estimate $c'$ of the $c$-packedness of~$P$. 
It computes the greedy $\frac{\eps}{4}\Delta$-simplifications $S_P$ and $S_Q$ of $P$ and $Q$, and explores the parameter space of $S_P$ and $S_Q$ in search of a monotone path from $(1, 1)$ to $(|S_P|, |S_Q|)$.
However, we terminate the search early and report \Timeout if it explores more than the following number of cells:
\[
    K_{c'} := (1+\eps)^2 \left(27c' + \frac{96c'}{\eps}\right) (n+m).
\]
This keeps the running time low.
Observe that $K_{c'} \geq K_c(\Delta)$ whenever $c' \geq c$ and $\Delta \geq \dF(P, Q)$.

\begin{lemma}
\label{lem:fallible_decider}
    \emph{$\Call{FallibleDecider}{P, Q, \Delta, \eps, c'}$} runs in $O{\left(d \cdot c'\, \frac{n+m}{\eps} \right)}$ time.
    If it reports \Yes, then $\dF(P, Q) \leq (1+\eps)\Delta$.
    If it reports \No, then $\dF(P, Q) > \Delta$.
    If it reports \Timeout, then neither $P$ nor $Q$ is $c'$-packed, or $\dF(P, Q) > \Delta$.
\end{lemma}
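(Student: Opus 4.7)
The plan is to treat the three possible outputs (\Yes, \No, \Timeout) of \textsc{FallibleDecider} separately and then bound its running time. The \Yes and \No outputs reduce to a direct triangle-inequality argument using \cref{lem:simplificationIsClose}; the \Timeout output is handled by the contrapositive of \cref{thm:symmetric_complexity_free_space}. I expect the \Timeout analysis to be the main point worth scrutinising, because that is where the specific constant defining $K_{c'}$ must be reconciled with the bound of \cref{thm:symmetric_complexity_free_space}.

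For the decisive outputs, a \Yes certifies a monotone path in $\FS[((1+\frac{\eps}{2})\Delta)](S_P, S_Q)$, so $\dF(S_P, S_Q) \leq (1+\frac{\eps}{2})\Delta$; combining this with $\dF(P, S_P), \dF(Q, S_Q) \leq \frac{\eps}{4}\Delta$ from \cref{lem:simplificationIsClose}, the triangle inequality yields $\dF(P, Q) \leq (1+\eps)\Delta$. A \No means that the algorithm has exhausted the $\Delta$-reachable portion of $\FS[((1+\frac{\eps}{2})\Delta)](S_P, S_Q)$ without reaching the corner $(|S_P|, |S_Q|)$, so $\dF(S_P, S_Q) > (1+\frac{\eps}{2})\Delta$, and the same triangle inequality gives $\dF(P, Q) > (1+\frac{\eps}{2})\Delta - 2\cdot\frac{\eps}{4}\Delta = \Delta$.

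For \Timeout, I would argue the contrapositive: assume one of $P$ or $Q$ is $c'$-packed and $\dF(P, Q) \leq \Delta$, and show no timeout occurs. Under this hypothesis $\alpha := \dF(P, Q)/\Delta \leq 1$, so \cref{thm:symmetric_complexity_free_space} applied with $c := c'$ bounds the number of parameter-space cells of $S_P$ and $S_Q$ meeting $\FS[((1+\frac{\eps}{2})\Delta)](S_P, S_Q)$ by $(27c' + 96c'/\eps)(n+m)$, which is strictly less than $K_{c'}$. Since a standard Alt--Godau reachability sweep only enters cells that intersect the free space, the exploration visits at most that many cells and the \Timeout threshold is never crossed. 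Contrapositively, if \Timeout is returned then either neither $P$ nor $Q$ is $c'$-packed, or $\dF(P, Q) > \Delta$. The one subtlety I want to check is that the algorithm's cell counter is incremented only on cells that actually intersect $\FS[((1+\frac{\eps}{2})\Delta)](S_P, S_Q)$, so that the count from \cref{thm:symmetric_complexity_free_space} truly bounds it.

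For the running time, \cref{lem:simplificationIsClose} constructs $S_P$ and $S_Q$ in $O(n+m)$ time; each visited cell is processed in $O(1)$ time using Alt and Godau's reachability-interval formulas; and the \Timeout budget guarantees that at most $K_{c'} = O(c'(n+m)/\eps)$ cells are ever touched. Storing per-cell data in a hash map in place of a full $|S_P| \times |S_Q|$ grid keeps the total running time at $O(c'(n+m)/\eps)$, as claimed.
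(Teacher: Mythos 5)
Your proposal is correct and follows essentially the same route as the paper: the \Yes/\No cases via the triangle inequality with \cref{lem:simplificationIsClose}, the \Timeout case via \cref{thm:symmetric_complexity_free_space} with $\alpha \leq 1$ (the paper argues this directly rather than by contrapositive, but it is the same argument), and the running time bounded by the $K_{c'}$ cell budget with $O(1)$ work per cell. The subtlety you flag --- that the counter only counts cells intersecting the free space --- and the hash-map remark are sensible implementation details consistent with the paper's DFS-style exploration.
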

\begin{proof}
    We first prove correctness.
    Observe that if the algorithm does not report \Timeout, the algorithm has either found a monotone path from $(1, 1)$ to $(|S_P|, |S_Q|)$, or has determined that no such path exists.
    In the former case, the path serves as a witness that $\dF(S_P, S_Q) \leq (1+\eps/2)\Delta$, and by the triangle inequality, $\dF(P, Q) \leq (1+\eps/2)\Delta + \eps\Delta / 4 + \eps\Delta / 4 = (1+\eps)\Delta$.
    In the latter case, we have $\dF(S_P, S_Q) > (1+\eps/2)\Delta$, and by the triangle inequality, $\dF(P, Q) > \Delta$.
    
    Suppose it does report \Timeout.
    Then, more than $K_{c'}$ cells of the parameter space of $S_P$ and $S_Q$ contain a $(1+\eps/2)\Delta$-free point.
    However, we obtain from~\cref{thm:symmetric_complexity_free_space} that if $P$ or $Q$ is $c'$-packed and $\Delta \geq \dF(P, Q)$, at most $K_{c'}(\Delta) \leq (27c' + 96'/\eps) (n+m) \leq K_{c'}$ cells contain a $(1+\eps/2)\Delta$-free point.
    Thus, either $P$ and $Q$ are both not $c'$-packed, or $\dF(P, Q) > \Delta$ (or both).

    We now bound the running time.
    The greedy $\frac{\eps}{4}\Delta$-simplifications $S_P$ and $S_Q$ of $P$ and~$Q$ are constructed in $O(d (n+m))$ time~\cite{driemel12realistic}.
    Analogous to Breadth-First Search, we maintain a queue of unexplored reachable cells, and compute the reachability of any cell in $O(d)$ time.
    Our algorithm explores at most $K_{c'}$ parameter space cells, so the total running time is $O(d K_{c'})$.
\end{proof}

We use our fallible decider in a search procedure to approximate $\dF(P, Q)$.
The search procedure is similar to~\cref{alg:polynomial}, although it requires some care when using \textsc{FallibleDecider} instead of a proper decision algorithm.
We list the search procedure in~\cref{alg:approximation} (subroutine \textsc{RefineUpperBound}).

Like~\cref{alg:polynomial}, the algorithm starts with an over-estimate $\Delta^+$ of $\dF(P, Q)$.
However, due to \textsc{FallibleDecider} requiring an estimate of the packedness of $P$, we also use an estimate $c' := 1$ of~$c$ that we update throughout the algorithm, while maintaining the invariant that $c' < 2c$.
For every considered estimate $c'$, the algorithm first computes a sufficiently small interval that contains $\dF(P, Q)$.
We then perform a binary search over this interval.

\begin{lemma}
    Let $P$ and $Q$ be polygonal curves with $n$ and $m$ vertices, respectively.
    Let $P$ or $Q$ be $c$-packed for some unknown $c$.
    For arguments $\Delta^+\geq \dF(P, Q)$ and $\eps \in (0, 1]$, \emph{\textsc{RefineUpperBound}} computes a value~$\Delta$ with $\dF(P, Q) \leq \Delta < (1+\eps)^2 \dF(P, Q)$ in $O{\left(d \cdot c\, \frac{n+m}{\eps} \left(1 + \log \left(1 + \frac{1}{\eps} \log \frac{\Delta^+}{\dF(P, Q)} \right) \right) \right)}$~time.
\end{lemma}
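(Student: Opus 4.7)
I would prove the lemma in two halves: correctness and running-time, with the main conceptual point being how to handle the three-valued output of \textsc{FallibleDecider} in a search loop. The overall strategy is to maintain a pair $(c', \Delta)$ of estimates and double $c'$ (starting from $c' = 1$) only when there is evidence that the current $c'$ is too small. By \cref{lem:fallible_decider}, as soon as $c' \geq c$, the decider behaves exactly like an honest $(1{+}\eps)$-approximate decider on every $\Delta \geq \dF(P,Q)$; in particular it never returns \Timeout in that regime. Since $c'$ doubles, the first $c'$ meeting $c' \geq c$ satisfies $c' < 2c$, establishing the invariant stated in the paper.

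\paragraph*{Correctness.} The \Yes/\No answers of \textsc{FallibleDecider} are always sound, so any outputted value $\Delta$ is certified on both sides: a \Yes at $\Delta$ gives $\dF(P,Q) \leq (1{+}\eps')\Delta$ and a \No at some $\Delta' < \Delta$ gives $\dF(P,Q) > \Delta'$. I plan to mirror \cref{alg:polynomial}: first narrow down to a constant-factor interval $[\Delta', (1{+}\eps')\Delta]$ containing $\dF(P,Q)$, then binary-search inside it at geometric ratio $1{+}\eps'$ with $\eps' = \eps/3$ to obtain a $(1{+}\eps')^2 \leq 1{+}\eps$ relative error. The only twist relative to \cref{lem:fallible_decider} is that some calls return \Timeout; I would treat a \Timeout as equivalent to \No for the search (it also certifies ``$c' < c$ or $\dF > \Delta$''), and verify the identified interval at the end by one more call at the upper endpoint with the current $c'$. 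If the verification fails, $c'$ must be too small, so we double $c'$ and restart.

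\paragraph*{Locating the scale.} Because the input over-estimate $\Delta^+$ can be as bad as exponential (so $B := \Delta^+/\dF(P,Q) \leq 2^{O(n+m)}$), I cannot afford the naive $\log B$ halving used in \cref{lemm:with_polynomial_estimate}. Instead, I would iterate on $\Delta_j := \Delta^+ / 2^{2^j}$ for $j = 0, 1, 2, \ldots$, i.e. \emph{doubling the exponent}. When $c' \geq c$ the first call gives \Yes (since $\dF \leq \Delta^+$), and after at most $O(\log \log B)$ iterations the value $\Delta_j$ drops below $\dF(P,Q)$, producing a \No. A second phase of doubling-the-exponent inside the identified range $[\Delta_j, \Delta_{j-1}]$ of width $2^{2^{j-1}}$ pinpoints a factor-2 subinterval in another $O(\log \log B)$ calls, after which the usual binary search over $\lceil \log_{1+\eps'} 2 \rceil$ values refines to a $(1{+}\eps)$-approximation in $O(\log(1/\eps))$ further calls. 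The total number of calls per fixed $c'$ is therefore $O(\log \log B + \log(1/\eps)) = O\bigl(\log(1 + \eps^{-1}\log B)\bigr)$.

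\paragraph*{Summing the cost and the main obstacle.} Each call with packedness estimate $c'$ costs $O(c'(n{+}m)/\eps)$ by \cref{lem:fallible_decider}, and $c'$ ranges over $1, 2, 4, \ldots$ up to a final value less than $2c$. Since the per-call cost grows geometrically in $c'$, the work at the correct scale $c' = \Theta(c)$ dominates; multiplying by the $O(\log(1 + \eps^{-1}\log B))$ calls spent there yields the claimed bound $O\bigl(c\,(n{+}m)/\eps \cdot \log(1 + \eps^{-1}\log(\Delta^+/\dF(P,Q)))\bigr)$. The trickiest part is charging the work for the ``wrong'' values $c' < c$: we must show that a single phase at such a $c'$ still uses only $O(\log(1 + \eps^{-1}\log B))$ calls before deciding to double $c'$. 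The treat-\Timeout-as-\No-and-verify protocol controls this exactly — a bad $c'$ is detected when either the verification at the candidate interval fails, or the scale search bottoms out without ever seeing a \Yes — and in both cases the total number of calls is bounded in the same way, so the geometric sum in $c'$ still telescopes to the stated running time.
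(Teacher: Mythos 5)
Your overall architecture matches the paper's: double $c'$ starting from $1$, locate the scale with a doubly-exponential search over a geometric sequence of thresholds, refine by binary search at resolution $1+\eps'$, treat \Timeout as ``not \Yes'' during the search, and re-check the crossover point to decide whether to accept or to double $c'$. The per-$c'$ call count and the geometric sum over $c'$ are also argued the same way. However, there is a genuine gap at the crux of the correctness argument. You write that ``if the verification fails, $c'$ must be too small,'' but this does not follow from \cref{lem:fallible_decider}: the call that must be verified is the one at the \emph{lower} endpoint $\Delta_i$ of the identified interval (the \Yes at the upper endpoint is always sound and verifying it gives no information), and at that endpoint a \Timeout only certifies ``neither curve is $c'$-packed \emph{or} $\dF(P,Q) > \Delta_i$.'' The second disjunct is exactly the situation you are in ($\Delta_i$ typically lies just below $\dF(P,Q)$), so a \Timeout there is fully consistent with $c' \geq c$. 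Your own parenthetical acknowledges the disjunction, and then the very next sentence discards one of its branches. As written, the protocol could double $c'$ forever (or at least beyond $2c$), breaking both termination and the claimed $O(c\cdot\ldots)$ bound.

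The missing idea is that the timeout threshold must be set with enough slack that, once $c' \geq c$, the verification call at $\Delta_i$ \emph{provably} cannot time out even though $\Delta_i < \dF(P,Q)$. This is why the paper defines $K_{c'} = (1+\eps)^2(27c' + 96c'/\eps)(n+m)$ rather than the bound $K_{c'}(\Delta)$ one would get for $\alpha \leq 1$: since the \Yes at $\Delta_{i-1}$ guarantees $\alpha_{i-1} \leq 1+\eps$, one computes $K_{c'}(\Delta_i) \leq (1+\eps)K_{c'}(\Delta_{i-1}) \leq K_{c'}$, so a \Timeout at $\Delta_i$ genuinely witnesses $c' < c$ and the doubling is justified. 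Your proof needs this computation (and a correspondingly padded threshold inside the fallible decider) to establish the invariant $c' < 2c$; without it the ``treat-\Timeout-as-\No-and-verify'' protocol does not close the loop.
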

\begin{proof}
    We prove the correctness of \textsc{RefineUpperBound} for any estimate $c'$ of $c$.
    For brevity, we omit the arguments $P$ and $Q$ passed to \textsc{FallibleDecider}, as these do not vary.
    By~\cref{lem:fallible_decider}, if one of $P$ and $Q$ is $c'$-packed, the call $\Call{FallibleDecider}{\Delta^+, \eps, c'}$ in line~\ref{line:check_over-estimate} reports \Yes, since $\dF(P, Q) \leq \Delta^+$.
    Let $b \in \mathbb{N}$ be the smallest integer such that $\Call{FallibleDecider}{\Delta_{2^b}, \eps, c'}$ does not report \Yes.
    Then there exists an integer $i \in [0, 2^b]$ such that $\Call{FallibleDecider}{\Delta_{i-1}, \eps, c'}$ reports \Yes, but $\Call{FallibleDecider}{\Delta_i, \eps, c'}$ does not.
    By~\cref{lem:fallible_decider}, we have $\dF(P, Q) \leq (1+\eps)\Delta_{i-1}$, and additionally $\dF(P, Q) > \Delta_i$ (meaning $\dF(P, Q) > \Delta_{i-1} / (1+\eps)$) or neither $P$ nor $Q$ is $c'$-packed.

    We analyse line~\ref{line:check_result}, in which we decide whether to report $(1+\eps)\Delta_{i-1}$ as a $(1+\eps)^2$-approximation to $\dF(P, Q)$.
    Suppose $P$ or $Q$ is $c'$-packed.
    We show that $\Call{FallibleDecider}{\Delta_i, \eps, c'}$ will not report \Timeout, and thus reports \No.
    Let $\alpha_i = \dF(P, Q) / \Delta_i$ and $\alpha_{i-1} = \dF(P, Q) / \Delta_{i-1}$.
    Then
    \[
        K_{c'}(\Delta_i) = \left(27c' + 48c'\, \frac{1+\alpha_i}{\eps} \right) (n+m) \leq (1+\eps) \left(27c' + 48c'\, \frac{1+\alpha_{i-1}}{\eps} \right) (n+m) = (1+\eps) K_{c'}(\Delta_{i-1}).
    \]
    Observe that $K_{c'}(\Delta_{i-1}) \leq K_{c'} / (1+\eps)$, since $\dF(P, Q) \leq (1+\eps)\Delta_{i-1}$ and thus $\alpha_{i-1} \leq 1+\eps$.
    Hence $K_{c'}(\Delta_i) \leq (1+\eps) K_{c'}(\Delta_{i-1}) \leq K_{c'}$, and the call $\Call{FallibleDecider}{\Delta_i, \eps, c'}$ will not report \Timeout.
    This concludes the proof of correctness.

    Next we analyse the running time for a fixed estimate $c'$ of $c$.
    We bound the number of calls to \textsc{FallibleDecider}, which dominate the running time.
    The linear scan on line~\ref{line:linear_scan} terminates at or before the first value $b^* \in \mathbb{N}$ with $\Delta_{2^{b^*}} < \dF(P, Q)$.
    It follows from our definition of $\Delta_{2^{b^*}} = \Delta^+ / (1+\eps)^{2^{b^*}}$ that $b^*$ is the minimum value for which $2^{b^*} > \log_{1+\eps} \frac{\Delta^+}{\dF(P, Q)}$.
    Given that $b^*$ is an integer, we obtain $2^{b^*} \geq \lfloor \log_{1+\eps} \frac{\Delta^+}{\dF(P, Q)} \rfloor + 1$.
    We note that since $\Delta^+ \geq \dF(P, Q)$, this lower bound on $2^{b^*}$ is at least $1$.
    In particular, we can take the logarithm on both sides to obtain that $b^* = \left\lceil \log_2 \left(1 + \left\lfloor \log_{1+\eps} \frac{\Delta^+}{\dF(P, Q)} \right\rfloor \right) \right\rceil = O{\left(\log_2 \left( 1 + \log_{1+\eps} \frac{\Delta^+}{\dF(P, Q)} \right) \right)}$.
    In line~\ref{line:binary_search} we perform a binary search over $[2^b]$, for the value $b \leq b^*$ found in line~\ref{line:linear_scan}.
    This search performs at most $b^*$ steps.
    Thus, both lines~\ref{line:linear_scan} and~\ref{line:binary_search} perform at most $b^* = O{\left(1 + \log_2 \left( 1 + \log_{1+\eps} \frac{\Delta^+}{\dF(P, Q)} \right) \right)}$ calls to \textsc{FallibleDecider}.

    Factoring in the running time of \textsc{FallibleDecider} (which is $O(d K_{c'}) = O{\left(d \cdot c'\, \frac{n+m}{\eps} \right)}$), the running time of \textsc{RefineUpperBound} for a fixed estimate $c'$ is $O{\left(d \cdot c'\, \frac{n+m}{\eps} \left(1+\log_2 \left( 1 + \log_{1+\eps} \frac{\Delta^+}{\dF(P, Q)} \right) \right) \right)}$.
    Summed over the values $c'$, which we double after each iteration up to a maximum of $2c-1$, we obtain a total running time of $O{\left(d \cdot c\, \frac{n+m}{\eps} \left(1+\log_2 \left( 1 + \log_{1+\eps} \frac{\Delta^+}{\dF(P, Q)} \right) \right) \right)}$.
    The claim follows from the fact that
    $
        \log_{1+\eps} x = \frac{\log_2 x}{\log_2 (1+\eps)} \leq \frac{\log_2 x}{\eps}, \text{ for all } \eps \in (0, 1]. 
    $
\end{proof}

We may invoke \textsc{RefineUpperBound} with an exponential over-estimate $\Delta^+\leq 2^{O(n+m)} \dF(P, Q)$.
Such an estimate is easily obtained in $O(d (n+m))$ time with the straightforward greedy algorithm by Bringmann and Mulzer~\cite{bringmann16approx_discrete_frechet}, listed as \textsc{ExponentialApprox} in~\cref{alg:approximation}.
Thus,~\cref{alg:approximation} runs in $O{\left(d \cdot c\, \frac{n+m}{\eps} \log{\left(\frac{n+m}{\eps} \right)}\right)}$ time, improving the dependency on $d$ significantly compared to~\cref{alg:polynomial}.
With this, we have a largely self-contained algorithm:

\thmMain*

\newpage

\begin{algorithm}[H]\caption{
    Our $(1+\eps)$-approximation algorithm for the Fréchet distance and its three subroutines.
    Complete Python code is available at \href{https://doi.org/10.5281/zenodo.17309708}{doi.org/10.5281/zenodo.17309708}.
    }
    \label{alg:approximation}
    \begin{algorithmic}[1]
    \Procedure{ApproxFr\'echetDistance}{$P,Q,\eps$}\Comment{Our $(1+\eps)$-approximation algorithm}
        \State $\Delta^+\gets\Call{ExponentialApprox}{P,Q}$
        \State $\eps' \gets \eps/3$ \Comment{$(1+\eps')^2 \leq 1+\eps$}
        \If{$\Delta^+=0$} \Return $0$ \textbf{else} \Return $\Call{RefineUpperBound}{P,Q,\Delta^+,\eps'}$ \EndIf
    \EndProcedure
    
    \Statex\vspace{-1.5ex}
    \Function{RefineUpperBound}{$P,Q,\Delta^+,\eps$}\Comment{Search for $\dF(P,Q)$ while estimating packedness}
    \LComment{Outputs some $\Delta$ such that $\dF(P,Q) \leq \Delta < (1+\eps)^2 \dF(P,Q)$.}
        \State $c'\gets 1$, and denote by $\Delta_i$ the value $\Delta^+/(1+\eps)^i$
        \Loop
            \If{$\Call{FallibleDecider}{P,Q,\Delta^+,\eps,c'} = \Yes$}\label{line:check_over-estimate}
            \State Linearly scan for the smallest $b\in\mathbb{N}$, such that $\Call{FallibleDecider}{P,Q,\Delta_{2^b},\eps,c'}\neq \Yes$\label{line:linear_scan}
            \State Binary search for $i\in[{2^b}]$, such that $\Call{FallibleDecider}{P,Q,\Delta_{i-1},\eps,c'}\!=\!\Yes$ \\
            \hphantom{Binary search for $i\in[{2^b}]$, such that\neghphantom{ and}} and $\Call{FallibleDecider}{P,Q,\Delta_{i\hphantom{-1}},\eps,c'}\!\neq\! \Yes$\label{line:binary_search}
            \If{$\Call{FallibleDecider}{P,Q,\Delta_i,\eps,c'}=\No$}\label{line:check_result}
                \State \Return $(1+\eps)\Delta_{i-1}$
            \EndIf
            \EndIf
                \State $c'\gets 2c'$ \Comment{\textsc{FallibleDecider} reported \Timeout}
        \EndLoop
    \EndFunction
    
    \Statex\vspace{-1.5ex}
    \Function{FallibleDecider}{$P,Q,\Delta,\eps,c$}\Comment{The fuzzy decision oracle guiding binary searches}
    \LComment{\spaceskip=2.3pt plus 2pt 
             Outputs \Yes/\No/\Timeout.
             \Timeout implies that more than $K_\eps$ cells are $\Delta$-reachable.
             \Yes implies that $\dF(P,Q)\leq(1+\eps)\Delta$. 
             \No implies that $\dF(P,Q)>\Delta$.}
        \State $(S_P, S_Q) \gets $ the greedy $\frac{\eps}{4}\Delta$-simplifications of $P$ and $Q$
        \State $K_c\gets (1+\eps)^2 (27c + 96c / \eps) (n+m)$
        \State $R\gets$ reachable points of the first $K_c+1$ reachable cells of $\FS[(1+\frac{\eps}{2})\Delta](S_P, S_Q)$
        \If{$R$ contains $(|S_P|,|S_Q|)$}
            \State
            \Return\Yes \Comment{$\dF(S_P, S_Q) \leq (1+\eps/2)\Delta$, so $\dF(P,Q)\leq (1+\eps)\Delta$}
        \ElsIf{$R$ contains points of more than $K_c$ cells}
            \State
            \Return\Timeout
        \Else\Comment{All reachable points were found, but not $(|S_P|,|S_Q|)$.}
            \State\Return\No \Comment{$\dF(S_P, S_Q) > (1+\eps/2)\Delta$, so $\dF(P,Q)>\Delta$}
        \EndIf
    \EndFunction

    \Statex\vspace{-1.5ex}
    \Function{ExponentialApprox}{$P,Q$}\Comment{Greedy upper bound on $\dF(P,Q)$ from \cite{bringmann16approx_discrete_frechet}}
    \LComment{Outputs some $\Delta^+$ such that $\dF(P,Q)\leq\Delta^+\leq 2^{O(n+m)}\dF(P,Q)$.}
        \State $(x,y)\gets(1,1)$ and $d\gets\|P(1) - Q(1)\|$
        \While{$(x,y)\neq(n,m)$}
            \State Let $P(x')$ be the first vertex strictly after $P(x)$, or $P(n)$ if it does not exist
            \State Let $Q(y')$ be the first vertex strictly after $Q(y)$, or $Q(m)$ if it does not exist
            \State $x^* \gets \arg\min_{x^*\in[x,x']}\|P(x^*) - Q(y')\|$
            \State $y^* \gets \arg\min_{y^*\in[y,y']}\|P(x') - Q(y^*)\|$
            \If{$x=n$ or $y=m$}
                \State $(x,y)\gets(x',y')$
            \ElsIf{$\|P(x^*) - Q(y')\|\leq \|P(x') - Q(y^*)\|$}
                \State $(x,y)\gets(x^*,y')$
            \Else
                \State $(x,y)\gets(x',y^*)$
            \EndIf
            \State $d\gets\max(d,\|P(x)-Q(y)\|)$
        \EndWhile
        \State\Return $d$
    \EndFunction

\end{algorithmic}
\end{algorithm}

\bibliographystyle{plainurl}
\bibliography{bibliography}

\appendix
\end{document}